\newcommand\version{June 10, 2018}
\newtheorem{theorem}{Theorem}
\newtheorem{proposition}[theorem]{Proposition}
\newtheorem{lemma}[theorem]{Lemma}
\newtheorem{corollary}[theorem]{Corollary}
\newtheorem{conjecture}[theorem]{Conjecture}
\theoremstyle{definition}
\theoremstyle{remark}
\newtheorem{remark}[theorem]{Remark}
\newcommand{\C}{\mathbb{C}}
\renewcommand{\epsilon}{\varepsilon}
\renewcommand{\phi}{\varphi}
\newcommand{\R}{\mathbb{R}}
\newcommand{\cH}{{\mathcal H}}
\DeclareMathOperator{\im}{Im}
\DeclareMathOperator{\spec}{spec}
\DeclareMathOperator{\supp}{supp}
\DeclareMathOperator{\tr}{Tr}
\begin{document}

\title[Inequalities for quantum divergences --- \version]{Inequalities for quantum divergences\\ and the Audenaert--Datta conjecture}

\author{Eric A. Carlen}
\address[Eric A. Carlen]{Department of Mathematics, Hill Center,
Rutgers University, 110 Frelinghuysen Road, Piscataway, NJ 08854-8019, USA}
\email{carlen@math.rutgers.edu}

\author{Rupert L. Frank}
\address[Rupert L. Frank]{Mathematisches Institut, Ludwig-Maximilans Universit\"at M\"unchen, Theresienstr. 39, 80333 M\"unchen, Germany, and Mathematics 253-37, Caltech, Pasa\-de\-na, CA 91125, USA}
\email{rlfrank@caltech.edu}

\author{Elliott H. Lieb}
\address[Elliott H. Lieb]{Departments of Mathematics and Physics, Princeton University, Washington Road, Princeton, NJ 08544, USA}
\email{lieb@princeton.edu}

\begin{abstract}
Given two density matrices $\rho$ and $\sigma$, there are a number of different expressions that reduce to the $\alpha$-R\'enyi relative entropy of $\rho$ with respect to $\sigma$ in the classical case; i.e., when $\rho$ and $\sigma$ commute. Only those expressions for which the Data Processing Inequality (DPI) is valid are of potential interest as quantum divergences in quantum information theory. 
Audenaert and Datta have made a conjecture on the validity of the DPI for an interesting family of quantum generalizations of the 
$\alpha$-R\'enyi relative entropies, the $\alpha-z$--R\'enyi relative entropies. They and others have contributed to the partial solution of this conjecture. We review the problem, its context, and the methods that have been used to obtain the results that are known at present, presenting a unified treatment of developments that have unfolded  in a number of different papers.
\end{abstract}

\thanks{\copyright\, 2018 by the authors. This paper may be
reproduced, in its entirety, for non-commercial purposes.\\
U.S.~National Science Foundation grants DMS-1501007 (E.A.C.), DMS-1363432 (R.L.F.), PHY-1265118 (E.H.L.) are acknowledged.}

\maketitle

\section{The Audenaert--Datta conjecture and known results}

The  sender of a message over a classical noisy channel communication encodes it into a sequence of characters from an alphabet -- possibly just $\{0,1\}$ -- 
that are physically represented by levels (amplitude or frequency) in a transmitted signal. Because of noise, the received levels will be random 
variables with continuous and overlapping distributions. The alphabet and encoding are known to the recipient, who is, however,  faced with the problem of deciding whether each incoming  random signal level represents, say,  $0$ or $1$. 

Shannon's theory tells the sender and the recipient  just how much redundancy they must employ for the recipient to correctly extract the message from the noisy signal that is received with an arbitrarily small probability of error.  
For example, suppose the communication channel they use is such that sending $0$ through the channel results in a random variable with a density $\rho$ centered on $0$, while sending $1$ through the channel results in a random variable with density $\sigma$ centered on $1$. Suppose the noise is such that
the different random variables produced at each step of the transmission are independent. Let $X_n$ denote the $n$-th signal received. 
If  the densities $\rho$ and $\sigma$ overlap (and they will if the noise is Gaussian) the receiver cannot tell for sure what was sent on a single observation. But if the sender repeats the transmission of the {\em same} signal $m$ times, and the receiver knows it is the same signal being sent $m$ times, and makes optimal use of the observed signals, then the probability of reading the wrong bit will go to zero exponentially fast as $m$ increases. 

Thus, the following problem is fundamental to classical communication theory:  Given two probability distributions
$\rho$ and $\sigma$ on $\R$, and a sequence of random variables $X_n$ drawn from one of these two distributions, decide on the basis of the observations $\{X_1,\dots,X_N\}$ whether $\rho$ or $\sigma$ is the distribution from which the random sequence is being drawn.  That is, one has to come up with a set $A_N\subset \R^N$ so that if 
$(X_1(\omega),\dots,X_N(\omega))\in A_N$, then one accepts that $\rho$ is the governing distribution, while otherwise, one accepts that $\sigma$ is the governing distribution.

There are two kinds of errors that one can make: accepting $\sigma$ when $\rho$ is the governing distribution, and accepting $\rho$ when $\sigma$ is the governing distribution.  Of course, if one takes $A_N$ to be all of $\R^N$, then one never makes the  first kind of error, but one will make the second kind of error whenever $\sigma$ is the governing distribution. Therefore, fix some small $\epsilon>0$, and require of $A_N$ that 
$\int_{A_N}\rho^{\otimes N} > 1-\epsilon$.
Then among all such choices for $A_N$, choose one that (nearly) minimizes $\log \int_{A_N}\sigma^{\otimes N}$. That is, define
\begin{equation}\label{lardev}
\beta_{\epsilon,N}(\rho,\sigma) = \inf\left\{ \log \int_{A_N}\sigma^{\otimes N}\ : \  A_N\subset \R^N \ {\rm is\ such\ that}\ 
\int_{A_N}\rho^{\otimes N} \geq 1-\epsilon\ \right\}.
\end{equation}
Then one has \cite{KL51,K67}
\begin{equation}\label{uplow}
\limsup_{N\to\infty} \frac1N \beta_{N,\epsilon}((\rho,\sigma) \leq -D(\rho||\sigma)  \quad{\rm and}\quad
\liminf_{N\to\infty} \frac1N \beta_{N,\epsilon}((\rho,\sigma) \geq -\frac{1}{1-\epsilon}D(\rho||\sigma)\, ,
\end{equation} 
where $D(\rho||\sigma)$ is the {\em relative entropy} or {\em Kullbach--Liebler divergence} of $\rho$ with respect to $\sigma$:
\begin{equation}\label{kull}
D(\rho||\sigma) = \int_\R \rho(x)(\log \rho(x) - \log \sigma(x)){\rm d}x\, .
\end{equation}
If one chooses $N$ so that $e^{-ND(\rho||\sigma)} <  \epsilon$, then one can expect to have made both types of errors small, of order $\epsilon$.

It is an easy consequence of Jensen's inequality that $D(\rho||\sigma)\geq 0$ with equality if and only if $\rho =\sigma$.
However, it is not in general true that $D(\rho||\sigma) = D(\sigma||\rho)$, and so the relative entropy is not a metric on the space of probability distributions.   The asymmetry directly reflects the asymmetry in the question that is answered in terms of  the relative entropy, namely: {\em If one chooses the acceptance rule $A_N$ so that the probability of correctly accepting $\rho$ is at least $1-\epsilon$, how small, as a function of $N$,  can one make the probability of incorrectly accepting $\rho$ when  $\sigma$ is sent?}  In this simple setting of independent random variables, the fact that the relative entropy arises as the answer to this question is a consequence of Cram\'er's theorem on large deviations, and it gives the relative entropy an {\em operational meaning}. 

A {\em (classical) divergence}  is a  function on pairs of probability  
densities taking values in $[0,\infty]$ that is somehow connected with how  ``distinguishable'' the two densities are, and as above, such functions need not be symmetric.  A mathematical definition of classical divergences  
was given by R\'enyi \cite{Ren} who introduced the R\'enyi relative entropies as a family of such divergences. For 
$\alpha\in (0,1)$, the $\alpha$-R\'enyi entropy is defined as
\begin{equation}\label{aren}
D_\alpha(\rho||\sigma) = \frac{1}{\alpha-1} \log \left(\int_\R\rho^{\alpha}(x) \sigma^{1-\alpha}(x){\rm d}x\right)\ .
\end{equation}
Later, Csizl\'ar \cite{Csi} gave an operational meaning to the R\'enyi relative entropies, but going into this would be too large a digression. Suffice it to say that a great many bounds on error probabilities have been given in terms of R\'enyi entropies. Our focus here is on the quantum aspects of this problem, and especially, quantum analogs of a certain  monotonicity property  that classical divergences must have. The basic monotonicity property can be easily  explained at an intuitive level for both the classical and quantum cases.

Any divergence is supposed to give, or at least bound,  the ``best asymptotic rate of distinguishability'' between $\rho$ and $\sigma$ in some operational context. Let $P(x,y)$
be a non-negative kernel with $\int P(x,y){\rm d}x = 1$ for all $y$. Define $P\rho(x) := \int_\R P(x,y)\rho(y){\rm d}y$
and $P\sigma(x) := \int_\R P(x,y)\sigma(y){\rm d}y$. Any classical divergence $D(\rho||\sigma)$ must satisfy
\begin{equation}\label{clDPI}
D(P\rho ||P\sigma) \leq D(\rho ||\sigma)
\end{equation}
since otherwise, one could apply the operation $P$ to $\rho$ and $\sigma$, and use whatever procedure one is using to distinguish $\rho$ and from $\sigma$, and get a better result. The inequality \eqref{clDPI}
which reflects the fact that applying a further random corruption of the signal can only make it harder to discern what is being sent, is known as the {\em (classical) Data Processing Inequality}. For the classical divergences discussed above, this is true on account of Jensen's inequality and the joint concavity of the integrands in $\rho$ and $\sigma$.  There are other important properties that divergences should have; e.g., certain additivity properties over products, but we focus instead on \eqref{clDPI} and its quantum analog. 

 In the quantum setting, probability densities are replaced by density matrices; that is, by non-negative trace class matrices $\rho$ with unit trace, and integrals are replaced by traces.  One natural quantum analog of the relative entropy $D(\rho||\sigma)$, known as the {\em Umegaki relative entropy} \cite{U62}, is defined to be
\begin{equation}\label{Um}
D(\rho ||\sigma) = \tr[\rho (\log \rho - \log \sigma)]\ ,
\end{equation}
which is closely related to the {\em von Neumann entropy} of a density matrix $\rho$:
\begin{equation}\label{vnent}
S(\rho) = -\tr[\rho \log \rho]\ .
\end{equation}
The fact that $D(\rho ||\sigma)\geq 0$ with equality if and only if $\rho = \sigma$ is still true in this setting, but it is no longer a direct application of Jensen's inequality. Indeed, not every classical entropy inequality is valid in the quantum setting.  For example, a marginal of a classical joint probability distribution never has an entropy 
exceeding that of the joint distribution itself. In the quantum setting, however, this is false.

Even when a classical entropy inequality has a valid quantum analog, its proof in the quantum setting may be much more difficult. Probably the first conjecture about a classical entropy inequality that did not obviously hold true in quantum mechanics was made by Lanford and Robinson, namely, Strong Subadditivity of the von Neumann entropy (SSA). This was proved in \cite{LR2} based on a convexity theorem for certain trace functionals in \cite{L}. It is a remarkable fact that while some of the inequalities in classical information theory carry through in quantum information theory and some do not,  it is often the more complicated ones that carry through, SSA  for instance, while some of the simpler ones fail. SSA is essentially equivalent to the joint convexity of the Umegaki relative entropy, and the quantum analog of  \eqref{clDPI}, the Data Processing Inequality. Wehrl's paper \cite{W} provides a  good review of the what was known on entropy and relative entropy inequalities, classical and quantum,  
through  1978, and it is an especially good source on differences between the classical and quantum settings. 

The {\em quantum Data Processing Inequality for a divergence $D$} says that for completely positive trace preserving maps
$\mathcal E$ and all density matrices $\rho$ and $\sigma$, one has
\begin{equation}\label{qDPI}
D(\mathcal E\rho ||\mathcal E \sigma) \leq D(\rho ||\sigma)\ .
\end{equation}
Completely positive trace preserving (CPTP) maps are also known as quantum operations, and are the general class of state transformations possible in an open quantum system \cite{Kr}.  Applying any such operation can only make the states harder to distinguish, and hence if $D$ is to have an operational meaning, it must satisfy the Data Processing Inequality.

While the von Neumann entropy \eqref{vnent} is the natural analog of the classical entropy $S(\rho) = - \int \rho \log \rho$, the situation is more complicated when one considers relative entropy and other divergences, since these involve two density matrices $\rho$ and $\sigma$ that need not commute. Thus, there are many ways one might try to write down a quantum analog of $\rho \log (\rho/\sigma)$, for example. 

Relative entropy arises in the answers to all sorts of questions in classical probability theory, and it is not evident that it is Umegaki's quantum generalization that answers the corresponding question in the quantum setting. Among the many other expressions that reduce to it when $\rho$ and $\sigma$ commute are
\begin{equation}\label{put}
\tr[\rho \log (\sigma^{-1/2}\rho \sigma^{-1/2})] \qquad{\rm and}\qquad  
\tr[\rho \log (\rho^{1/2}\sigma^{-1} \rho^{1/2})]\ .
\end{equation}
The Data Processing Inequality may be used to winnow the large field of putative divergences leaving a reduced class for which one might hope to find operational meanings. For example, the trace function on the right in \eqref{put} does satisfy \eqref{qDPI}, as does the Umegaki relative entropy, but the trace function on the left does not \cite{CL3}. 
It turns out  that the Umegaki relative entropy is the ``right choice'' as far as the direct quantum analog of the classical decision problem that was discussed above  is concerned, as shown by Hiai and Petz \cite{HP91}:  Consider two density matrices $\rho$ and $\sigma$ on a finite dimensional Hilbert space $\cH$. Let $\mathcal{P}(\cH^{\otimes N})$ be the set of orthogonal projections onto subspaces of $\cH^{\otimes N}$.  Fix $\epsilon>0$, and define
\begin{equation}\label{qdec}
\beta_{\epsilon,N}(\rho,\sigma) = \inf \left\{ \log \tr[\sigma^{\otimes N}A_N] 
\ :\  A_N\in  \mathcal{P}(\cH^{\otimes N})\ , \
\tr[\rho^{\otimes N}A_N] > 1-\epsilon\ 
\right\} .
\end{equation}
The result of Hiai and Petz \cite{HP91} is that then \eqref{uplow} is valid also in this quantum setting where now $D(\rho||\sigma)$ denotes the Umegaki relative entropy. (They actually prove this in a somewhat more general setting.) Another of their results is that 
\begin{equation}\label{BS}
 \tr[\rho (\log \rho - \log \sigma)] \leq \tr[\rho \log (\rho^{1/2}\sigma^{-1} \rho^{1/2})]\ ,
 \end{equation}
 and the inequality is strict when $\rho$ and $\sigma$ do not commute. Hence while the quantity on the right in \eqref{BS} may seem to be a natural extension of the notion of relative entropy to the quantum setting, and while it does satisfy \eqref{qDPI}, it is not the one that is relevant to the decision problem that we have been discussing -- which is not to say that it is not the relevant quantity for some other problem.
 
Thus, when generalizing classical entropy inequalities to the quantum setting, there is the difficulty that non-commutativity prevents one from directly mimicking the classical proofs, but also the non-commutativity is the source of a multiplicity of distinct quantum analogs of classical entropies or divergences: Which analog is meaningful in which settings?

This situation naturally arises when one considers  quantum analogs of the R\'enyi relative entropies. The obvious generalization  of \eqref{aren}, namely,
\begin{equation}\label{arenQ}
D_\alpha(\rho ||\sigma) := \frac{1}{\alpha-1}\log \left(\tr[\rho^\alpha \sigma^{1-\alpha}]\right)\ ,
\end{equation}
turns out to have the same operational meaning that was given for \eqref{aren} by Ciszl\'ar; this was proved by Mosonyi and Hiai \cite{MosHiai}.
However,  \eqref{arenQ} is not the only quantum analog of \eqref{aren} to have an operational meaning. 

Another generalization of the relative R\'enyi entropy was introduced recently by 
M\"uller-Lennert, Dupuis, Szehr, Fehr, Tomamichel \cite{MDSFT} and Wilde, Winter, Yang \cite{WWY}.  They introduces the {\em sandwiched R\'enyi entropies} 
\begin{equation}\label{arenQS}
\widetilde{D}_\alpha(\rho ||\sigma) := \frac{1}{\alpha-1}\log \left(\tr[(\sigma^{(1-\alpha)/2\alpha} \rho \sigma^{(1-\alpha)/2\alpha})^\alpha]\right)\ ,
\end{equation}  
For certain values of the the parameter, an operational meaning has been given in \cite{MosOg}.

Audenaert and Datta realized that all of these different quantum R\'enyi entropies -- and more -- could be brought together  in a two parameter family, the $\alpha-z$ R\'enyi entropies defined by 
\begin{equation}\label{azrenQ}
D_{\alpha,z}(\rho ||\sigma) := \frac{1}{\alpha-1}\log \left(\tr[(\sigma^{(1-\alpha)/2z} \rho^{\alpha/z} \sigma^{(1-\alpha)/2z})^z]\right)\ .
\end{equation}
Evidently, $D_\alpha(\rho ||\sigma) = D_{\alpha,1}(\rho ||\sigma)$ and 
$\widetilde{D}_\alpha(\rho ||\sigma)  = D_{\alpha,\alpha}(\rho ||\sigma)$.  The $\alpha$-$z$ R\'enyi relative entropies have appeared earlier in a paper by Jaksic, Ogata, Pautrat and Pillet \cite{JOPP}.

Audenaert and Datta raised the question: 
\begin{align*}
&\text{For which values of $\alpha$ and $z$ does $D_{\alpha,z}$ satisfy the}\\ & \text{quantum Data Processing Inequality \eqref{qDPI}?}
\end{align*}
  
By an argument of Lindblad and Uhlmann, based on Stinespring's theorem, this question is equivalent to whether the 
trace functionals appearing in the $D_{\alpha,z}$ have certain convexity or concavity properties. This will be explained below in detail, but in a nutshell, for our finite dimensional setting, the Stinespring Representation Theorem \cite{St} allows one to express CPTP operations in terms of isometric injections, unitary conjugations and partial traces, in such a way that once one knows the monotonicity \eqref{qDPI}  under partial traces -- one example of a CPTP map -- one knows it in general. 
Since partial traces can be written as averages over unitary conjugations \cite{U73}, it suffices to prove $D(\mathcal{E} \rho||\mathcal{E} \sigma) \leq D(\rho||\sigma)$ when $\mathcal{E}$ takes the form
$$\mathcal{E}(\rho) = \frac{1}{M}\sum_{j=1}^M U_j \rho U_j^*\ $$
where $U_1,\dots,U_M$ are unitary. Now consider, for example, the trace functional in \eqref{arenQ}, $\tr[\rho^\alpha \sigma^{1-\alpha}]$. Evidently, for each $j$,
$$\tr[((U_j\rho U_j^*)^\alpha (U_j\sigma U_j^*)^{1-\alpha}]  = \tr[\rho^\alpha \sigma^{1-\alpha}]\ ,$$
and therefore if $(\rho,\sigma) \mapsto \tr[\rho^\alpha \sigma^{1-\alpha}]$ is jointly concave, the standard quantum R\'enyi relative entropy \eqref{arenQ}  with $\alpha \in (0,1)$ satisfies 
$D(\mathcal{E} \rho||\mathcal{E} \sigma) \leq D(\rho||\sigma)$. 

However, the joint concavity of  $(\rho,\sigma) \mapsto \tr[\rho^\alpha \sigma^{1-\alpha}]$ is a special case of the Lieb Concavity Theorem \cite{L}, which says that for $0 \leq p,q$ with $p+q \leq 1$, $(A,B) \mapsto \tr[A^pB^q]$ is jointly concave. One may also consider  R\'enyi relative entropies for $\alpha > 1$. In this case, the factor of $\frac{1}{1-\alpha}$ in \eqref{arenQ} is negative, and so the DPI will follow if $(\rho,\sigma) \mapsto \tr[\rho^\alpha \sigma^{1-\alpha}]$ is {\em convex}, and for  $\alpha\in (1,2)$
this is true by  the Ando Convexity Theorem \cite{A}, which says that  $(A,B) \mapsto \tr[A^pB^q]$ is jointly convex for $-1 \leq q \leq 0$ and $0 \leq q+p \leq 1$. (Since we shall be concerned with the relation between convexity and concavity inequalities in what follows, 
it is worth remarking, that the proof of Ando's Convexity Theorem relies  on  the concavity result of \cite{L}.)  Since the Umegaki relative entropy is the $\alpha\to 1$ limit of the R\'enyi relative entropy, these convexity results also yield the Data Processing Inequality for it as well, as first proved by Lindblad \cite{L74,L75}. 

To prove the data processing inequality for the $\alpha-z$ R\'enyi entropies one requires concavity and convexity of more complicated trace functionals. ( In fact, as explained below, concavity/convexity is necessary as well as sufficient for the data processing inequality.)
The convexity theorem that implies the data processing inequality for certain of the sandwiched R\'enyi entropies was proved in \cite{FL} (see also \cite{Be}).
In their paper \cite{AD} Audenaert and Datta deduced the data processing inequality in certain cases 
 from concavity theorems of Hiai and conjectured a precise parameter regime for the validity of the data processing inequality.  
 
Let  $\mathcal M_N$ denote  the set of complex $N\times N$ matrices and  $\mathcal P_N$ denote the subset of positive definite matrices.
 Their conjecture refers to the trace function at the heart of the definition of the $\alpha-z$ R\'enyi entropies, and may be stated as follows:

\begin{conjecture}\label{conj}
If $1\leq p\leq 2$ and $-1\leq q<0$, then for any $K\in\mathcal M_N$
$$
\mathcal P_N\times\mathcal P_N \ni (A,B) \mapsto \tr \left( B^{q/2} K^* A^p K B^{q/2} \right)^{1/(p+q)}
$$
is jointly convex.
\end{conjecture}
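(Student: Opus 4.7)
The plan is a double variational argument that reduces joint convexity of the power trace to Ando's convexity theorem.

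First I would reduce to $K = I$ by the standard block-diagonal doubling trick: working in $\mathcal{M}_{2N}$ with $\tilde A = A \oplus 0$ and a suitable augmented $\tilde B$ that absorbs $K$, one sees it suffices to prove joint convexity of $(A,B)\mapsto \tr(B^{q/2}A^p B^{q/2})^{1/(p+q)}$. Next, set $s = 1/(p+q)$. In the regime $p+q \le 1$ one has $s \ge 1$, and the Young-type formula
\[
\tr X^s \;=\; \sup_{Y > 0}\Bigl\{\, s\,\tr(XY) \;-\;(s-1)\,\tr Y^{s/(s-1)}\Bigr\},
\]
applied to $X = B^{q/2}A^p B^{q/2}$, reduces the problem to joint convexity of $(A,B)\mapsto \tr(A^p\cdot B^{q/2} Y B^{q/2})$ for each fixed $Y\ge 0$, since the $Y$-penalty term is independent of $(A,B)$.

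The heart of the argument is then the convexity of $(A,B)\mapsto \tr(A^p W(B))$ with $W(B) = B^{q/2} Y B^{q/2}$. Since $p\in[1,2]$ the function $t\mapsto t^p$ is operator convex, so a second Kosaki-style variational formula rewrites $\tr(A^p W)$ as a supremum over auxiliary matrices $Z$ of expressions linear in both $A$ and $W$ separately. Substituting $W = B^{q/2} Y B^{q/2}$ and eliminating the auxiliaries $Y, Z$, one aims to land on jointly convex expressions of Ando's form $\tr(K_1^* A K_1\, B^{q'})$ with $q'\in[-1,0]$ and $1+q'\in[0,1]$, which is covered by Ando's theorem.

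The main obstacle is that the sandwich $B^{q/2} Y B^{q/2}$ does not factor as a simple power of $B$, so the two variational steps must be engineered so that the combined auxiliary matrices absorb this failure of factorization and the exponent bookkeeping ends inside Ando's convexity region. I expect the regime $p+q > 1$ (where $s<1$ and one needs an infimum formula for the outer concave power $X\mapsto\tr X^s$) to be the hardest, since there the outer concavity in $X$ must be reconciled with the inner operator-convexity of $t\mapsto t^p$; endpoint cases such as $p+q=1$, $p=2$, or $q=-1$ may also require separate treatment, possibly via complex interpolation along the critical line $p+q=1$ to transport convexity between the two regimes.
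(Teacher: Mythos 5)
You are attempting to prove what is, in this paper, an \emph{open conjecture}: the text states explicitly that Conjecture \ref{conj} is known only for $q=p-1$ (Ando), $p=1$ \cite{FL} and $p=2$ \cite{CFL}, and is open for $1<p<2$, $-1<q<0$. So the burden on your sketch is very high, and it does not meet it. The fatal step is the first variational reduction. Writing $s=1/(p+q)$ and using $\tr X^s=\sup_{Y\ge 0}\{s\tr XY-(s-1)\tr Y^{s/(s-1)}\}$ for $s\ge 1$ does correctly reduce the problem to a \emph{sufficient} condition, namely that $(A,B)\mapsto \tr\bigl(A^p\,B^{q/2}YB^{q/2}\bigr)$ be jointly convex for every fixed $Y\ge 0$. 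But that condition is false, and false already at $Y=1$: there it reads $\tr(A^pB^q)$, which by Ando's theorem (and by the necessary conditions in Proposition \ref{mainnec}, which force $(p+q)s\ge 1$) is jointly convex only when $p+q\ge 1$ --- whereas you invoked the supremum formula precisely in the regime $p+q\le 1$. So the intermediate statement your reduction requires is strictly stronger than the conjecture and is simply not true. In the complementary regime $p+q>1$ you would need the infimum formula, and then Lemma \ref{convex}(2) demands \emph{joint} convexity in $(A,B,Y)$ of $\tr(A^pB^{q/2}YB^{q/2})$, a three-variable statement that is harder still and also fails in general.

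The second, ``Kosaki-style'' step does not repair this. A supremum representation of $\tr(A^pW)$ that is linear in $A$ and in $W$ separately only re-encodes the operator convexity of $t\mapsto t^p$; after substituting $W=B^{q/2}YB^{q/2}$ the sandwich obstruction you yourself identify is untouched, and ``eliminating the auxiliaries so that the bookkeeping lands in Ando's region'' is asserted rather than carried out. This is exactly where every known partial proof has to bring in a genuinely new ingredient: for $p=1$ the residual term is handled by the Epstein-type concavity of $\Upsilon_{-q,s'}$ (Step 1 of the proof of Theorem \ref{main}(3)); for $p=2$ by the operator convexity of $(A,B)\mapsto AKB^qK^*A$ or by Lieb's \cite[Corollary 2.1]{L} (Steps 2--3); and in general one only reaches $s\ge\min\{1/(p-1),1/(q+1)\}$, which for $1<p<2$, $-1<q<0$ is strictly larger than the conjectured threshold $1/(p+q)$. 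Finally, ``complex interpolation along $p+q=1$ to transport convexity'' is not a valid mechanism --- convexity of a family of functionals is not preserved or created by interpolation in the exponents. In short: the reduction to $K=1$ and the use of the variational lemmas are in the spirit of Section \ref{sec:var}, but the core of the argument is missing, and the one concrete intermediate claim you state is refutable.
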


Currently, this conjecture is known for $q=p-1$ \cite{A}, for $p=1$ \cite{FL} and for $p=2$ \cite{CFL}. It is open in the remaining cases.

Conjecture \ref{conj} is stated in a different, but equivalent form in \cite{AD}, namely, as convexity of $\mathcal P_N \ni A \mapsto \tr \left( A^{q/2} K^* A^p K A^{q/2} \right)^{1/(p+q)}$. The equivalence of this version with the one stated above will be explained at the end of this introduction.

The concavity of $(A,B) \mapsto \tr[A^pB^q]$ with $0 \leq p,q$ and $p+q \leq 1$ lies at the basis of all of the examples  of quantum data processing inequalities that we have discussed. A number of different proofs of this fundamental result were given by other authors, and the two that are most relevant for our purposes appeared within a few years of \cite{L}. One of these was due to Araki \cite{A75,A76}. His proof introduced a method making use of the {\em relative modular operator} and other tools coming from modern theory of von Neumann algebras, and has the merit of providing generalizations of the inequality to this setting. The paper \cite{A75} is especially clear on the matricial case and is accessible without prior knowledge of the the modern theory of von Neumann algebras. An even more pedestrian account of this approach, which is now well known in the quantum information theory community,  can be found in \cite{NP}.   

Another methodology was introduced by Epstein \cite{Ep}, who employs the theory of Herglotz functions, also known as Pick or Nevanlinna functions.  These are functions
$\varphi$ that are analytic in the open upper half plane, $\C_+$, and that have a positive imaginary part.  Such functions have a canonical integral representation, as recalled below, and from this a number of their properties follow. These functions have played a central role in the theory of operator inequalities since the 1934 theorem of L\"owner \cite{Low} stating that a function $f:\R \to \R$ is such that for all self adjoint matrices $A,B$ of the same size, $f(A) \geq f(B)$ whenever $A\geq B$ (in the usual ordering of self-adjoint matrices) if and only if $f$ has an analytic continuation to a Herglotz function.   The deep part of L\"owner's Theorem is the ``only if'' part:  It is an immediate consequence of the integral representation for Herglotz functions that if $f$ is Herglotz, not only is $A\mapsto f(A)$ operator monotone, it is operator concave. A full account of this theory, with three distinct proofs of L\"owner's Theorem, can be found in the book of Donoghue \cite{Don}. A forthcoming book of Simon \cite{S} will present 11 distinct proofs, three of them new. 

One of the merits of Epstein's method, explained in Section 3,  is that he was able to use it to prove a conjecture that had been made in \cite{L}, namely that for fixed self adjoint $B$, the map
$A \mapsto \tr[(BA^pB)^{1/p}]$ is concave for $p\in (0,1)$. At the end of the first paragraph of his paper, Epstein wrote: ``The applicability
of the method obviously extends beyond the examples treated here.''  Hiai \cite{Hi0,Hi1} has carried out a thorough development of the method, bringing in a number of significant new ideas. However, the method still has limitations.   It was conjectured in \cite{CL99} and proved in \cite{CL2} that  $A \mapsto \tr[(BA^pB)^{1/p}]$ is convex for 
$1 < p < 2$. Since $z\mapsto z^p$ is  a Herglotz function only  for $0 \leq p \leq 1$, Epstein's argument cannot be adapted to this case.  The method of proof, which turns out to be useful also for some cases of the Audenaert-Datta Conjecture, relied on variational arguments, specifically Legendre transforms, and, most crucially, the fact that if $f(x,y)$ is jointly convex on $\R^m\times \R^m$, then $g(y) := \inf_{x\in \R^m}\{f(x,y)\}$ is convex in $y$.  This variational method is explained in Section 4.

One of the intriguing aspects of the story of progress on the Audenaert-Datta Conjectures  is the interplay between the ``analyticity method" and the ``variational method''.
It appears that the analyticity method is especially useful for proving concavity and the variational method  is more useful for proving convexity, but this is not meant to be an absolute distinction. However, it is our belief that understanding these two methodologies is a worthwhile endeavor also for work outside the Audenaert--Datta program. 
There  are quite a number of papers \cite{Hi0,CL2,Hi1,FL,AD,CFL,Hi2} over which trace inequalities related to the Audenaert and Datta  conjecture are spread out. 
These papers, both before and after the Audenaert--Datta paper \cite{AD}, are often not self-contained, and one of our goals is to try to tell the story seamlessly and thereby make the results from the  literature more easily available also to newcomers to the field. There are no new results in this paper, but we hope to present a newly coherent account of  recent advances.

%
%
%
%
%
%
%

In the following we will study a more general problem than the one occurring in Conjecture 1, with three parameters instead of two. For $A,B\in\mathcal P_N$, $K\in\mathcal M_N$ and parameters $p,q,s\in\R$ we define
$$
\Psi_{p,q,s}(A,B) := \tr \left( B^{q/2} K^* A^p K B^{q/2} \right)^s \,,
$$
and the problem is to determine the values of $p,q,s$ such that $\Psi_{p,q,s}$ is jointly convex or concave. 
Because of symmetries, it suffices to consider $p,q,s$ such that 
\begin{equation}\label{pqs}
p\geq q
\qquad\text{and}\qquad
s>0 \,.
\end{equation}
To see this, note that by an approximation argument, we can assume that $K$ is invertible, and then
$\left( B^{q/2} K^* A^p K B^{q/2} \right)^s = \left( B^{-q/2} K^{-1} A^{-p} K^{-*} B^{-q/2} \right)^{-s}$, so that with $K^{-*}$ replacing $K$ on the right,
$$
\Psi_{p,q,s}(A,B) =  \Psi_{-p,-q,-s}(A,B)\ .$$
Next since 
 $B^{q/2} K^* A^p K B^{q/2}$ and $A^{p/2} K B^q K^* A^{p/2}$ have the same non-zero eigenvalues with the same mutiplicities, with $K^*$ replacing $K$ on the right,
\begin{equation}
\label{eq:commute}
\Psi_{p,q,s}(A,B) = \Psi_{q,p,s}(B,A).
\end{equation}

\vfill\break

The following theorem, which is the main subject of this paper, summarizes our current knowledge about concavity and convexity properties of $\Psi_{p,q,s}$.

\begin{theorem}\label{main}
Let $K\in\mathcal M_N$ be arbitrary.
\smallskip
\begin{enumerate}
\item If $0\leq q\leq p\leq 1$ and $0<s\leq 1/(p+q)$, then $\Psi_{p,q,s}$ is jointly concave.
\smallskip
\item If $-1\leq q\leq p\leq 0$ and $s>0$, then $\Psi_{p,q,s}$ is jointly convex.
\smallskip
\item If $1\leq p< 2$, $-1\leq q\leq 0$ and $s\geq\min\{1/(p-1),1/(q+1)\}$, then $\Psi_{p,q,s}$ is jointly convex. If $p=2$, $-1\leq q\leq 0$ and $s\geq 1/(2+q)$, then $\Psi_{p,q,s}$ \smallskip
is jointly convex.
\end{enumerate}
\end{theorem}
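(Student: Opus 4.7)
The plan organizes around the identity $\Psi_{p,q,s}(A,B)=\tr M(A,B)^s$ with $M(A,B):=B^{q/2}K^*A^pKB^{q/2}$, together with the fact that on $\mathcal P_N$ the scalar trace $T\mapsto\tr T^s$ is monotone and concave for $s\in(0,1]$ and convex monotone for $s\geq 1$. The task therefore reduces to identifying joint operator concavity or operator convexity of $M$ (or of an appropriate power $M^r$) as a function of $(A,B)$, and then composing with the scalar trace functional.

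For part (1), with $0\leq q\leq p\leq 1$, I would invoke the Epstein--Hiai analyticity method to establish joint operator concavity of the map $(A,B)\mapsto M(A,B)^{1/(p+q)}$. The key observation is that $A\mapsto A^p$ is Herglotz for $p\in(0,1)$, so sandwiching by the fixed positive matrix $B^{q/2}K^*\cdot KB^{q/2}$ preserves the Herglotz property; further analytic continuation arguments, together with the integral representation of matrix-valued Herglotz functions, permit taking the $1/(p+q)$-th root while retaining operator concavity. Joint concavity in $(A,B)$ follows from the single-variable version by the standard block-matrix symmetrization. Then for $s\in(0,1/(p+q)]$ one has $s(p+q)\in(0,1]$, so $t\mapsto t^{s(p+q)}$ is operator monotone and concave; composing with the operator-concave $M^{1/(p+q)}$ and taking trace yields concavity of $\Psi_{p,q,s}$.

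For part (2), with $-1\leq q\leq p\leq 0$ and $s>0$, I would use the integral representations $A^p=c_p\int_0^\infty\lambda^{p}(A+\lambda)^{-1}\,d\lambda$ (and analogously for $B^q$), valid for $p,q\in(-1,0)$, combined with the joint operator convexity of $(A,X)\mapsto X^*(A+\lambda)^{-1}X$. Integrating against a positive measure in $\lambda$ and $\mu$ yields joint operator convexity of $(A,B)\mapsto M(A,B)$. For $s\geq 1$, joint convexity of $\Psi_{p,q,s}$ follows immediately from convexity and monotonicity of $T\mapsto\tr T^s$. For $s\in(0,1)$, one passes through the resolvent representation $\tr T^s=c_s\int_0^\infty\tau^{s-1}\tr[T(T+\tau)^{-1}]\,d\tau$ and argues by reducing the integrand, via a further resolvent expansion, to expressions whose joint convexity in $(A,B)$ reduces to the operator-convexity of $M$ already established.

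For part (3), with $1\leq p\leq 2$ and $-1\leq q\leq 0$, the two boundary cases $p=1$ \cite{FL} and $p=2$ \cite{CFL} are proved by the variational method: $(\tr M^s)^{1/s}$ is rewritten as an infimum over an auxiliary positive matrix $Y$ of a functional that is \emph{jointly} convex in $(A,B,Y)$ (a Legendre/Rotfel'd-type dual representation of the Schatten quasi-norm), and the infimum preserves joint convexity in $(A,B)$. For the interior $p\in(1,2)$, the hypothesis $s\geq\min\{1/(p-1),1/(q+1)\}$ partitions the regime: when $s\geq 1/(q+1)$ one interpolates from the $p=1$ formula by integrating in the $p$-parameter, and when $s\geq 1/(p-1)$ one interpolates from the $p=2$ formula. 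The main obstacle I anticipate throughout is constructing the correct dual representation in part (3): the operator convexity of $t\mapsto t^p$ on $[1,2]$ must be combined with the operator convexity of $t\mapsto t^q$ on $[-1,0]$ in such a way that, after introducing $Y$, the full expression is jointly convex in $(A,B,Y)$; producing the right auxiliary variable with the correct scalar power on $Y$ is the technical heart of \cite{CFL} and the reason why the threshold $s\geq\min\{1/(p-1),1/(q+1)\}$ appears in the statement.
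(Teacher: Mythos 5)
Your organizing reduction --- establish joint \emph{operator} concavity/convexity of $M(A,B)=B^{q/2}K^*A^pKB^{q/2}$ or of a power of it, then compose with $\tr(\cdot)^s$ --- is not what the paper does, and it cannot be repaired, because the operator-level statements you assert are false or unproven. For part (2): $(A,B)\mapsto B^{q/2}K^*A^pKB^{q/2}$ is \emph{not} jointly operator convex on the stated range. Take $p=q=-1$, $K=1$ and restrict to the diagonal $A=B$ (a linear restriction, so it would preserve operator convexity): you get $A\mapsto A^{-2}$, and $t\mapsto t^{-2}$ is not operator convex. This is consistent with \cite[Theorem 3.2]{CFL}, quoted after Lemma \ref{opconv}, which shows that already in $AKB^qK^*A$ the outer exponent cannot be moved off $1$ without destroying operator convexity. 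Your derivation fails because in $X^*(A+\lambda)^{-1}X$ the variable $X$ must enter \emph{linearly}; substituting the nonlinear $X=B^{q/2}$ does not preserve joint convexity. Moreover, even granting operator convexity of $M$, composing with $\tr(\cdot)^s$ for $0<s<1$ could not give convexity, since $t\mapsto t^s$ is concave and $T\mapsto\tr[T(T+\tau)^{-1}]$ is concave in $T$, so your resolvent step points the wrong way. For part (1), joint operator concavity of $(A,B)\mapsto M(A,B)^{1/(p+q)}$ is far stronger than what the Epstein--Hiai method yields: the method produces a scalar Herglotz function $\phi(z)=\tr\bigl(z^{-1}+F(z)^{-1/(p+q)}\bigr)^{-1}$ along complex lines $z\mapsto(zC+G,\,zD+H)$ and concludes concavity of the \emph{trace} functional only (Lemma \ref{epstein}, Proposition \ref{hiaiconc}); and to descend from $s=1/(p+q)$ to $0<s<1/(p+q)$ one needs concavity of the whole family $\tr\bigl(1+tM^{-1/(p+q)}\bigr)^{-1}$, $t>0$, fed into $x^\sigma=\tfrac{\sin\pi\sigma}{\pi}\int_0^\infty(1+t/x)^{-1}t^{\sigma-1}\,dt$. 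Your ``operator monotone composition'' shortcut presupposes exactly the unproven operator concavity.

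What actually drives parts (2) and (3) is a Legendre-type duality at the trace level: $\tr X^s=\sup_{Y\ge0}\bigl(s\tr XY-(s-1)\tr Y^{s/(s-1)}\bigr)$ for $s>1$ or $s<0$, and the corresponding infimum for $0<s<1$, combined with a substitution such as $Z=B^{q/2}YB^{q/2}$ that \emph{decouples} the $A$- and $B$-dependence into separate summands, each convex in its own variable for fixed $Z$; a supremum of such expressions is jointly convex by Lemma \ref{convex}(1), while the infimum version requires joint convexity in $(A,B,Z)$ (Lemma \ref{convex}(2)), supplied for $p=2$, $1/(2+q)\le s<1$ by Lieb's Corollary 2.1. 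For part (2) the indispensable first move is the inversion $\Psi_{p,q,s}=\Psi_{-p,-q,-s}$, which converts the problem to negative $s$ where the supremum formula applies; nothing in your sketch plays this role. Your treatment of $1<p<2$ by ``integrating in the $p$-parameter'' from the endpoints is not an argument --- no such interpolation exists --- and the threshold $\min\{1/(p-1),1/(q+1)\}$ arises instead from two separate direct proofs: $s\ge1/(1+q)$ uses operator convexity of $A\mapsto A^p$ together with concavity of $B\mapsto\tr(Z^{1/2}B^{-q}Z^{1/2})^{s/(s-1)}$ from Proposition \ref{ups}(1), and $s\ge1/(p-1)$ uses the commuted form $A^{p/2}KB^qK^*A^{p/2}$ with Lemma \ref{opconv}. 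Note also that Proposition \ref{ups}(1), as used there, rests on the complex-analysis part (1), so the two methods are intertwined rather than interchangeable.
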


In part (3), for $p=1$ the condition $s\geq\min\{1/(p-1),1/(q+1)\}$ is to be understood as $s\geq 1/(q+1)$.  The information contained in the thoerem, and extended by symmetry in $p$ and $q$,  is summarized in the following figure:

\medskip

\begin{center}
\begin{tikzpicture}

\draw [fill=yellow,yellow] (0.5,0) rectangle (3.5,3);
\draw [thick] (0.5,0) -- (0.5,3);
\draw [thick] (3.5,0) -- (3.5,3);
\draw [thick] (0.5,0) -- (3.5,0);
\draw [thick] (0.5,3) -- (3.5,3);
\node [right] at (3.4,0) {{$-1$}};
\node [above] at (0.5,3) {{$-1$}};

\draw [fill=lime,lime] (3.5,3) rectangle (6.5,6);
\draw [thick] (3.5,3) -- (3.5,6);
\draw [thick] (6.5,3) -- (6.5,6);
\draw [thick] (3.5,3) -- (6.5,3);
\draw [thick] (3.5,6) -- (6.5,6);

\draw [fill=yellow,yellow] (6.5,0) rectangle (9.5,3);
\draw [thick] (6.5,0) -- (6.5,3);
\draw [ultra thick, red] (9.5,0) -- (9.5,3);
\draw [thick] (6.5,0) -- (9.5,0);
\draw [thick] (6.5,3) -- (9.5,3);
\draw [->] (8,4) -- (9.4,2.4);
\node [above] at (8,5.1) {{convex for}};
\node [above] at (8,4) {{${\displaystyle s \geq \frac{1}{2+q}}$}};

\draw [fill=yellow,yellow] (0.5,6) rectangle (3.5,9);
\draw [thick] (0.5,6) -- (0.5,9);
\draw [thick] (3.5,6) -- (3.5,9);
\draw [thick] (0.5,6) -- (3.5,6);
\draw [ultra thick, red] (0.5,9) -- (3.5,9);
\draw [->] (4.1,7.3) -- (2.8,8.8);
\node [above] at (5.2,7.4) {{convex for}};
\node [above] at (5.2,6.3) {{${\displaystyle s \geq \frac{1}{p+2}}$}};

\draw [ultra thick] (3.5,-0.5) -- (3.5,9.5);
\draw [ultra thick] (0,3) -- (10,3);

\node [right] at (3.5,9.5) {{$q$ axis}};
\node [above] at (10,3) {{$p$ axis}};
\node [right] at (9.1,9.4) {{$q=p$ }};

\node [below] at (5,5.7) {{\bf concave}};
\node [below] at (5,5.1) {{for }};
\node [below] at (5,4.4) {{${\displaystyle 0 \leq s  \leq \frac{1}{p+q}}$ }};

\node [below] at (2,2.7) {{\bf convex}};
\node [below] at (2,2.1) {{for }};
\node [below] at (2,1.3) {{$s \geq 0$ }};

\node [below] at (8,2.7) {{\bf convex}};
\node [below] at (8,2.0) {{for $s\geq$}};
\node [below] at (8,1.4) {{${\displaystyle  \frac{1}{p-1}\wedge \frac{1}{q+1}}$ }};

\node [below] at (2,8.7) {{\bf convex}};
\node [below] at (2,8) {{for $s\geq$}};
\node [below] at (2.0,7.4) {{${\displaystyle  \frac{1}{q-1}\wedge \frac{1}{p+1}}$ }};

\draw [dashed] (0,-0.5) -- (1,0.5);
\draw [dashed] (3,2.5) -- (3.95,3.45);
\draw [dashed] (6,5.5) -- (9.7,9.2);

\node [below] at (5.3,-.5) {Convexity and concavity for $\Psi_{p,q,s}$};

\end{tikzpicture}

\end{center}

\medskip

\emph{History.} We discuss the three cases in the theorem separately, plus the case $s=1$ which historically came first and played an important role in the development of the field of matrix analysis.
\begin{enumerate}
\item[(0)] Case $s=1$: This is due to Lieb \cite{L} for $0\leq q\leq p\leq 1$ with $p+q\leq 1$, as well as for $-1\leq q\leq p\leq 0$, and due to Ando \cite{A} for $1\leq p\leq 2$, $-1\leq q\leq 0$ with $p+q\geq 1$.
\smallskip
\item Case $0\leq q\leq p\leq 1$: Partial results by Hiai for $1\leq s\leq 1/(p+q)$ \cite[Theorem 2.3]{Hi0} and for $1/2\leq s\leq 1/(p+q)$ \cite[Theorem 2.1 (1)]{Hi1} and by the authors for $0<s\leq 1/(1+q)$ \cite[Theorem 4.4]{CFL} (see also \cite[Proposition 3]{FL} for $p=1$, $s=1/(1+q)$). Later Hiai found a proof \cite[Theorem 2.1]{Hi2} that covers the full range. We reproduce this proof in Section \ref{sec:epstein}.
\smallskip
\item Case $-1\leq q\leq p\leq 0$: This is due to Hiai. After a partial result for $1/2\leq s\leq -1/(p+q)$ \cite[Theorem 2.1 (2)]{Hi1}, the full result appears in \cite[Theorem 2.1]{Hi2}. We reproduce this proof in Section \ref{sec:var}.  The key to Hiai's proof is to consider the equivalent result for {\em negative} $s$: Of the two equivalent results that are related by changing the signs of $p$, $q$ and $s$, {\em only one is amenable to treatment by the variational method}. 
\smallskip
\item Case $1\leq p\leq 2$, $-1\leq q\leq 0$: Result for $p=1$, $s=1/(1+q)$ due to the last two authors \cite[Proposition 3]{FL}. Remaining results due to the authors \cite{CFL}. We reproduce the proofs in Section \ref{sec:var}.
\end{enumerate}

\medskip

We now complement Theorem \ref{main} with necessary conditions.

\begin{proposition}\label{mainnec}
Let $s>0$ and $p\geq q$ with $(p,q)\neq(0,0)$.
\begin{enumerate}
\item If $\mathcal P_2\times\mathcal P_2 \ni (A,B)\mapsto \Psi_{p,q,s}(A,B)$ is jointly concave for $K=1$, then $0\leq q\leq p\leq 1$ and $0<s\leq 1/(p+q)$.
\smallskip
\item If $\mathcal P_4\times\mathcal P_4 \ni (A,B)\mapsto \Psi_{p,q,s}(A,B)$ is jointly convex for $K=1$, then either $-1\leq q\leq p\leq 0$ and $s>0$ or $1\leq p\leq 2$, $-1\leq q\leq 0$, $(p,q)\neq (1,-1)$ and $s\geq 1/(p+q)$.
\end{enumerate}
\end{proposition}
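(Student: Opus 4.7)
The plan is to rule out parameters $(p,q,s)$ outside the claimed region by producing explicit pairs $(A,B)$ where joint concavity (resp.~convexity) fails, proceeding from commuting to non-commuting examples.

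First, I would restrict to simultaneously diagonal $A,B$. The functional then decouples, $\Psi_{p,q,s}(A,B)=\sum_i a_i^{ps}b_i^{qs}$, and joint concavity/convexity on the positive orthant is equivalent to concavity/convexity of the scalar function $f(a,b)=a^{ps}b^{qs}$ on $(0,\infty)^2$. A direct Hessian analysis of $f$ gives diagonal entries $ps(ps-1)a^{ps-2}b^{qs}$ and $qs(qs-1)a^{ps}b^{qs-2}$, and determinant proportional to $pqs^2\bigl(1-s(p+q)\bigr)$. Negative semidefiniteness forces $0\le ps,qs\le 1$ and $s(p+q)\le 1$, yielding most of case~(1)'s conclusion: $p,q\ge 0$ and $s\le 1/(p+q)$. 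Positive semidefiniteness splits into two regimes: either $ps\le 0$ and $qs\le 0$, giving $p,q\le 0$ (first alternative of (2)); or $ps\ge 1$, $qs\le 0$, and the determinant condition forces $s(p+q)\ge 1$ (second alternative). In particular, the exclusion of $(p,q)=(1,-1)$ in (2) is already forced by the scalar analysis: there $pq=-1$ while $s(p+q)=0$, so the Hessian determinant equals $-s^4<0$ for every $s>0$, and $f$ is not even jointly convex in the commuting case.

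The sharp upper bounds $p\le 1$ in (1) and $p\le 2$ (with symmetrically $q\ge -1$) in (2) do not follow from the commuting analysis and must come from genuinely non-commuting examples. These reflect the operator-theoretic fact that $A\mapsto A^p$ is operator concave only for $p\in[0,1]$ and operator convex only for $p\in[-1,0]\cup[1,2]$. To rule out $p>1$ in (1), I would compute the second variation of $\Psi_{p,q,s}$ along a path $(A_0+tX,B_0+tY)$ with $A_0$ non-scalar diagonal and $X$ off-diagonal, using the Daleckii--Krein formula to express the Hessian as a sum over eigenvalue pairs of $A_0$ weighted by divided differences of $x\mapsto x^p$. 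For $p>1$, these divided differences combine with the outer $s$-power to produce a direction along which the Hessian is strictly positive, contradicting concavity; this only requires $2\times 2$ matrices, which matches the statement. For (2), the analogous construction uses $4\times 4$ block-diagonal base matrices with off-block perturbations: the operator-convexity failure of $A\mapsto A^p$ at $p=2$ first becomes visible in dimension $4$, and a suitable off-block perturbation channels this failure into a non-positive-semidefinite direction for the Hessian of $\Psi_{p,q,s}$, ruling out $p>2$ (and, by the symmetry \eqref{eq:commute}, $q<-1$).

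The principal technical difficulty is engineering these non-commuting perturbations. The Daleckii--Krein Hessian is a sum over eigenvalue pairs of $A_0$ and $B_0$ weighted by divided differences of $x^p$, $x^{q/2}$, and $x^s$; isolating a single direction with the wrong sign requires a delicate choice of base point and perturbation, and the two outer powers $B^{q/2}$ and the $s$-th root must be handled together with the inner power $A^p$ so that the obstruction on the divided differences of $x^p$ survives through both operations. The sharp dimension counts $N=2$ in (1) and $N=4$ in (2) are intimately tied to the widths of the operator-convexity windows of the relevant power functions, and verifying these precise minimal dimensions is itself the most subtle aspect of the argument.
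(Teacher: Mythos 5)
The commuting (simultaneously diagonal) reduction and the Hessian analysis of $(a,b)\mapsto a^{ps}b^{qs}$ are correct and coincide with the first step of the paper's argument; they yield $p,q\geq 0$ and $s\leq 1/(p+q)$ in the concave case and the two sign regimes plus $s(p+q)\geq 1$ in the convex case (your determinant at $(p,q)=(1,-1)$ is $-s^2$, not $-s^4$, but the sign is what matters). The problem is everything after that. The endpoint constraints -- $p\leq 1$ in (1), and $p\notin(0,1)$, $p\leq 2$, $q\geq -1$ in (2) (note that your scalar analysis only gives $ps\geq 1$, not $p\geq 1$, so the exclusion of $0<p<1$ also needs the non-commuting input) -- are exactly the hard part, and your proposal does not prove them: it announces a Daleckii--Krein second-variation computation along an unspecified non-commuting perturbation at an unspecified base point, and then concedes that ``isolating a single direction with the wrong sign requires a delicate choice'' and that verifying it ``is itself the most subtle aspect of the argument.'' No base point, no perturbation, and no sign verification are exhibited, so as it stands this is a plan, not a proof.

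For comparison, the paper avoids the second-variation computation entirely. Since $K=1$ but $B$ is free, joint concavity/convexity in particular gives concavity/convexity in $A$ for every fixed $B$; writing an arbitrary invertible $K$ as $U|K|$ and taking $B^{q/2}=|K|$ (absorbing $U$ into $A$) shows that $A\mapsto\Upsilon_{p,s}(A)=\tr(K^*A^pK)^s$ is concave/convex for \emph{every} invertible $K$. Proposition \ref{upsnec} then handles the endpoints by a degenerate choice $K=\begin{pmatrix}1&0\\1&\epsilon\end{pmatrix}$, $\epsilon\to 0$, which collapses the functional to $(a,b)\mapsto\tr(A^p+B^p)^s$ (scalars for concavity, $2\times 2$ blocks for convexity, whence $\mathcal P_4$); a second derivative at $a\ll b$ gives $p\leq 1$ in the concave case, and the expansion $\tr(tA^p+B^p)^s=\tr B^{ps}+st\tr B^{p(s-1)}A^p+o(t)$ reduces the convex case to operator convexity of $x\mapsto x^p$, which is classically known to hold exactly for $p\in[-1,0]\cup[1,2]$. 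If you want to salvage your route, the cleanest fix is to adopt this reduction rather than to engineer perturbations by hand; otherwise you must actually produce the matrices and verify the sign of the Hessian, which you have not done.
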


This proposition is due to Hiai \cite[Propositions 5.1(2) and 5.4(2)]{Hi1}. It is natural to conjecture that these necessary conditions are also sufficient.

\begin{conjecture}
If $1\leq p\leq 2$, $-1\leq q<0$ and $s\geq 1/(p+q)$, then for any $K\in\mathcal M_N$
$$
\mathcal P_N\times\mathcal P_N \ni (A,B) \mapsto \tr \left( B^{q/2} K^* A^p K B^{q/2} \right)^s
$$
is jointly convex.
\end{conjecture}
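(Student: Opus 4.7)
The plan has two stages: first prove the boundary case $s = 1/(p+q)$, then extend to the interior $s > 1/(p+q)$. Using the identity $B^{q/2}K^*A^pKB^{q/2} = (A^{p/2}KB^{q/2})^*(A^{p/2}KB^{q/2})$, we can rewrite
\begin{equation*}
\Psi_{p,q,s}(A,B) = \|A^{p/2} K B^{q/2}\|_{2s}^{2s}.
\end{equation*}
At the boundary $s = 1/(p+q)$, $\Psi_{p,q,s}$ is positively homogeneous of degree $1$ in $(A,B)$ jointly, so joint convexity is equivalent to subadditivity.

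For Stage 1 (the boundary), the two endpoints $p=1, s = 1/(1+q)$ and $p=2, s = 1/(2+q)$ are already known from \cite[Proposition~3]{FL} and \cite{CFL} respectively. The attempt for intermediate $1 < p < 2$ is to produce a jointly subadditive variational representation
\begin{equation*}
\Psi_{p,q,1/(p+q)}(A,B) = \inf_{X \in \mathcal X} F_p(A,B,X)
\end{equation*}
where $F_p$ is jointly subadditive in $(A,B)$ for each $X$, depends analytically on $p$, and reduces to the existing proofs at the two endpoints; subadditivity can then be transferred across $p \in (1,2)$ either by a complex analyticity argument in the style of Epstein (Section~\ref{sec:epstein}) or by exploiting the integral representation $A^{p/2} = c_p \int_0^\infty A(A + \lambda)^{-1} \lambda^{p/2-1}\, d\lambda$ of the operator-monotone function $t \mapsto t^{p/2}$.

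For Stage 2 (the interior $s > 1/(p+q)$), the difficulty is that $\tr(X^s)$ at different values of $s$ are not scalar powers of one another, so convexity does not propagate automatically. The route I propose is the Young-type variational identity
\begin{equation*}
\tr(X^s) = \sup_{Y \geq 0}\bigl\{s\, \tr(XY) - (s-1)\, \tr(Y^{s/(s-1)})\bigr\}, \qquad s \geq 1,
\end{equation*}
applied to $X = B^{q/2}K^*A^pKB^{q/2}$. The aim is to recognize $\tr(XY) = \tr(A^p\, K B^{q/2} Y B^{q/2} K^*)$, for fixed $Y \geq 0$, as a case of the boundary problem or of Ando's theorem with a modified $K$, so that the supremum of jointly convex expressions in $(A,B)$ preserves joint convexity; the penalty $-(s-1)\tr(Y^{s/(s-1)})$ is independent of $(A,B)$ and does not obstruct this.

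The main obstacle is Stage~1 for strictly $1 < p < 2$: no variational representation uniform in $p$ is currently available that smoothly interpolates the arguments of \cite{FL} and \cite{CFL}, and the Herglotz--Epstein framework of Section~\ref{sec:epstein} is naturally suited to concavity rather than convexity. Closing the gap between the known range $s \geq \min\{1/(p-1), 1/(q+1)\}$ from Theorem~\ref{main}(3) and the conjectured range $s \geq 1/(p+q)$ therefore appears to demand a genuinely new convexity tool, perhaps a hybrid of the variational and analyticity methods developed in Sections~\ref{sec:var} and~\ref{sec:epstein}.
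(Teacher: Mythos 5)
The statement you are asked to prove is not proved in the paper: it is stated there as an open conjecture, with the remaining open cases explicitly identified as $1<p<2$, $-1\leq q<0$, $1/(p+q)\leq s<\min\{1/(p-1),1/(q+1)\}$ (for $s=1/(p+q)$ this is exactly Conjecture \ref{conj} of Audenaert and Datta). Your proposal, to its credit, ends by conceding that Stage 1 for $1<p<2$ requires a genuinely new tool, so it is a research plan rather than a proof. Two of its steps fail concretely. First, in Stage 1 there is no mechanism by which subadditivity ``transfers across $p\in(1,2)$'' from the endpoints $p=1$ and $p=2$: convexity does not interpolate in a parameter merely because the family depends analytically on it, and no candidate $F_p$ is actually exhibited. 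The known proofs at the two endpoints (\cite{FL} for $p=1$, and \cite{CFL} for $p=2$ via Lieb's joint convexity of $(A,B,Z)\mapsto\tr AKB^qK^*AZ^{1-1/s}$) rest on structurally different inputs, each rigid at its own endpoint.

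Second, Stage 2 does not work as described. For fixed $Y\geq 0$ the cross term $\tr(XY)=\tr\bigl(A^p\,KB^{q/2}YB^{q/2}K^*\bigr)$ is \emph{not} an instance of Ando's theorem with a modified $K$, because the $B$-dependence sits inside $B^{q/2}YB^{q/2}$, which is not of the form $LB^qL^*$ for a fixed $L$. The only way to disentangle it is the substitution $Z=B^{q/2}YB^{q/2}$, which moves the $B$-dependence into the penalty term $\tr\bigl((B^{-q/2}ZB^{-q/2})^{s/(s-1)}\bigr)$; requiring that term to be concave in $B$ (via part (1) of Proposition \ref{ups}) forces $s/(s-1)\leq -1/q$, i.e.\ $s\geq 1/(1+q)$. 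This is precisely Step 1 of the paper's proof of part (3) of Theorem \ref{main}, and it is exactly why the known range is $s\geq\min\{1/(p-1),1/(q+1)\}$ rather than the conjectured $s\geq 1/(p+q)$. Note also that when $p+q>1$ the conjectured range includes values $s<1$, where your sup-form variational identity is unavailable; the inf-form then demands joint convexity of the cross term in all three variables, which the paper achieves only for $p=2$ via \cite[Corollary 2.1]{L}. So your proposal, carried out, reproduces the known results of Theorem \ref{main} (3) but leaves the conjecture open --- as does the paper.
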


Note that for $s=1/(p+q)$ this is Conjecture \ref{conj} of Audenaert and Datta. The  remaining case of the conjecture is
$$
1<p<2 \,,\ -1\leq q<0 \,, \ 1/(p+q)\leq s < \min\{1/(p-1),1/(q+1)\}\,.
$$
(In fact, the case $s=1$ can be excluded as a conjecture, due to a theorem of Ando.)


\bigskip

We now turn to a different, but related problem. For $A\in\mathcal P_N$, $K\in\mathcal M_N$ and parameters $p,s\in\R$ we define
$$
\Upsilon_{p,s}(A) := \tr (K^* A^p K)^s \,,
$$
that is, $\Upsilon_{p,s}(A)=\Psi_{p,q,s}(A,1)$ for any $q$. As before, we can and will restrict ourselves to the case $s>0$.

We will be interested in convexity and concavity properties of $\Upsilon_{p,s}$. While those are a consequence of similar properties of $\Psi_{p,q,s}$, we will conversely prove them first and use them in our discussion of $\Psi_{p,q,s}$.

\begin{proposition}\label{ups}
Let $K$ be arbitrary.
\smallskip
\begin{enumerate}
\item If $0\leq p\leq 1$ and $0<s\leq 1/p$, then $\Upsilon_{p,s}$ is concave.
\smallskip
\item If $-1\leq p\leq 0$ and $s>0$, then $\Upsilon_{p,s}$ is convex.
\smallskip
\item If $1\leq p\leq 2$ and $s\geq 1/p$, then $\Upsilon_{p,s}$ is convex.
\smallskip
\end{enumerate}
\end{proposition}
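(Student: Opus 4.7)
The plan is to handle the three parts separately, combining in each case (i) the L\"owner--Heinz theorem on operator monotonicity and convexity/concavity of $A\mapsto A^p$ with (ii) the convexity, concavity, and monotonicity of the scalar trace power $M\mapsto\tr M^s$. For the ``easy'' sub-ranges these two ingredients compose directly; for the harder sub-ranges one must either reduce by inversion to the concavity regime, invoke Epstein's Herglotz/analyticity method (Section~\ref{sec:epstein}), or apply the Legendre/variational method (Section~\ref{sec:var}).

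For Part~(1), operator concavity of $A\mapsto A^p$ on $[0,1]$ makes $A\mapsto K^*A^pK$ matrix concave. For $s\in(0,1]$, $M\mapsto\tr M^s$ is concave and monotone increasing on $\mathcal P_N$, and concavity of $\Upsilon_{p,s}$ follows by composition (monotone concave outer of matrix-concave inner). The remaining range $s\in(1,1/p]$ is exactly Epstein's theorem (extended by Hiai and reviewed in Section~\ref{sec:epstein}), whose proof exploits the Herglotz representation
\[
x^p = c_p\int_0^\infty \frac{x\,t^{p-1}}{x+t}\,dt \qquad (p\in(0,1)).
\]

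For Part~(2), I would use the inversion identity (valid for invertible $K$, with the general case following by density): with $L:=K^{-*}$ and $q:=-p\in[0,1]$,
\[
\Upsilon_{p,s}(A) = \tr(K^*A^pK)^s = \tr\bigl(K^{-1}A^{-p}K^{-*}\bigr)^{-s} = \tr(L^*A^qL)^{-s}.
\]
Since $A^q$ is operator concave on $[0,1]$, $A\mapsto L^*A^qL$ is matrix concave; and $M\mapsto\tr M^{-s}$ is convex and monotone \emph{decreasing} on $\mathcal P_N$ for every $s>0$, because $x\mapsto x^{-s}$ is convex and decreasing on $(0,\infty)$. The composition of a monotone-decreasing convex outer with a matrix-concave inner is convex in $A$, which covers the full range $s>0$ in a single step and elegantly bypasses the awkward regime $s\in(0,1)$ in which $\tr M^s$ itself is concave rather than convex.

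For Part~(3), the case $p=2$ follows immediately from the identity $\Upsilon_{2,s}(A)=\|AK\|_{2s}^{2s}$: for $s\geq 1/2$ the exponent $2s\geq 1$ makes $\|\cdot\|_{2s}$ a norm and $X\mapsto\|X\|_{2s}^{2s}$ convex, so composition with the linear map $A\mapsto AK$ is convex. For $p\in[1,2)$ and $s\geq 1$ I would use the Legendre duality
\[
\tr M^s = \sup_{X\geq 0}\bigl[s\,\tr(MX)-(s-1)\,\tr X^{s/(s-1)}\bigr],
\]
so that
\[
\Upsilon_{p,s}(A) = \sup_{X\geq 0}\bigl[s\,\tr(A^p KXK^*) - (s-1)\,\tr X^{s/(s-1)}\bigr];
\]
each term in the supremum is convex in $A$ by operator convexity of $A^p$ on $[1,2]$ together with $KXK^*\geq 0$, so the supremum is convex. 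The remaining sub-range $p\in[1,2)$, $s\in[1/p,1)$ is the most delicate and requires the refined variational argument of Section~\ref{sec:var}: this is the main obstacle, because the naive composition fails (the outer $\tr M^s$ is concave), the inversion trick of Part~(2) fails ($A\mapsto A^{-p}$ is not operator convex for $-p\in[-2,-1]$), and the simple Legendre argument above requires $s\geq 1$. Overcoming this gap requires a more refined variational formula that exploits $ps\geq 1$ to reduce to a known convex trace functional.
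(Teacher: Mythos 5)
Your proposal is essentially correct wherever it gives an argument, and in several places it takes a genuinely different and more economical route than the paper. For part (2) the paper splits into $s<1$, $s=1$ and $s>1$: the case $s<1$ is handled via the variational formula \eqref{eq:var2} with the substitution $Y=C^{1-p}$ together with Ando's convexity theorem \cite{A} for $(A,C)\mapsto\tr K^*A^pKC^{1-p}$, and the case $s>1$ via the supremum formula \eqref{eq:var1}. Your single-step argument --- invert to write $\Upsilon_{p,s}(A)=\tr(L^*A^{-p}L)^{-s}$ and compose the operator-concave inner map with the outer functional $M\mapsto\tr M^{-s}$, which is convex and antitone for the L\"owner order --- is valid (antitonicity follows from the min--max monotonicity of eigenvalues, and convexity of $\tr f(M)$ from convexity of $x\mapsto x^{-s}$), and it avoids Ando's theorem entirely; this is a genuine simplification. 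Likewise, your composition argument for part (1) in the sub-range $s\le 1$ and your identity $\Upsilon_{2,s}(A)=\|AK\|_{2s}^{2s}$ for $p=2$ are correct elementary shortcuts that the paper does not use (the paper derives all of part (1) from the complex-analysis proof of Theorem \ref{main}(1) specialized to $q=0$, and does not treat $p=2$ separately). Your Legendre-duality argument for $1\le p<2$, $s\ge1$ coincides with the method of \cite{CL2} and of Section \ref{sec:var}.

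Two caveats. First, for part (1) you still import the range $1<s\le 1/p$ from the analyticity method; note that this is not literally ``Epstein's theorem'' (which is the endpoint $s=1/p$ only), and the intermediate exponents do not follow from the endpoint by composition, since Epstein yields trace concavity, not operator concavity, of $A\mapsto(K^*A^pK)^{1/p}$ --- one genuinely needs Proposition \ref{hiaiconc} and the subsequent integral representation in $t$. Second, the sub-case $1\le p<2$, $1/p\le s<1$ of part (3) that you leave open is \emph{not} proved in Section \ref{sec:var} of this paper; the paper itself only cites \cite[Theorem 1.1]{CL2} for part (3) and notes that the result is not needed for Theorem \ref{main}. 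Your deferral is therefore on par with the paper's, but the pointer should be to \cite{CL2} rather than to Section \ref{sec:var}.
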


\emph{History.}
\begin{enumerate}
\item Case $0\leq p\leq 1$: Initial result for $s=1/p$ due to Epstein \cite{Ep}.  The first and third authors proved the result under the assumption $s\geq 1$ \cite[Theorem 1.1]{CL2}. The full result is due to Hiai \cite[Theorem 4.1 (1)]{Hi1}.
\smallskip
\item Case $-1\leq p<0$: Result due to Hiai \cite[Theorem 4.1 (2)]{Hi1}.
\smallskip
\item Case $1\leq p\leq 2$: Result due to the first and third author \cite[Theorem 1.1]{CL2}.
\end{enumerate}

We next show that the conditions in Proposition \ref{ups} are also necessary.

\begin{proposition}\label{upsnec}
Let $s>0$ and $p\neq 0$.
\smallskip
\begin{enumerate}
\item If $\mathcal P_2\ni A\mapsto \Upsilon_{p,s}(A)$ is concave for any invertible $K$, then $0<p\leq 1$.
\smallskip
\item If  $\mathcal P_4\ni A\mapsto \Upsilon_{p,s}(A)$ is convex for any invertible $K$, then either $-1\leq p <0$ and $s>0$ or $1\leq p\leq 2$ and $s\geq 1/p$.
\end{enumerate}
\end{proposition}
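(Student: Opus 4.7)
The plan is to derive necessary conditions by computing the Hessian of $\Upsilon_{p,s}$ at well-chosen matrices $A$ and sweeping over $K$ by continuity, since invertible $K$ are dense in $\mathcal M_N$ and $\tr(K^*A^pK)^s$ depends continuously on $K\in\mathcal M_N$ and $A\in\mathcal P_N$. The first test is $K=I$: then $\Upsilon_{p,s}(A)=\tr A^{ps}$, and convexity (resp.\ concavity) on $\mathcal P_N$ for $N\geq 2$ is equivalent to convexity (resp.\ concavity) of the scalar map $\lambda\mapsto\lambda^{ps}$ on $(0,\infty)$. The concave case forces $0\leq ps\leq 1$, which with $s>0$ and $p\neq 0$ gives $p>0$; the convex case forces $ps\leq 0$ or $ps\geq 1$, i.e.\ either $p<0$ with $s>0$, or $p>0$ with $s\geq 1/p$.

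Next, I would take $K$ rank-one, $K=uv^*$, as a limit of invertible matrices. The functional collapses to $\Upsilon_{p,s}(A)=\|v\|^{2s}(u^*A^pu)^s$, and a direct Taylor expansion at $A=I+tH$ with $H=H^*$ (using that $I$ commutes with $H$, so $(I+tH)^p=\sum_k\binom{p}{k}t^kH^k$) yields
\[
\tfrac{d^2}{dt^2}\bigg|_{t=0}(u^*(I+tH)^pu)^s = sp(p-1)\,u^*H^2u + s(s-1)p^2\,(u^*Hu)^2.
\]
Choosing $u=(1,1)^T/\sqrt{2}$ and $H=\mathrm{diag}(1,-1)$ on $\mathbb C^2$ gives $u^*Hu=0$ and $u^*H^2u=1$, reducing the Hessian to $sp(p-1)$. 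Concavity then forces $p(p-1)\leq 0$, i.e.\ $0\leq p\leq 1$; together with the first test this proves part (1). Convexity forces $p(p-1)\geq 0$, i.e.\ $p\leq 0$ or $p\geq 1$, producing the dichotomy of part (2) save for the bounds $p\leq 2$ (when $p>0$) and $p\geq -1$ (when $p<0$).

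To capture these last two bounds, a genuinely four-dimensional test is required. The plan is to take $A=\mathrm{diag}(X,Y)$ on $\mathbb C^4$ with $X,Y\in\mathcal P_2$ and $K$ with a nontrivial off-diagonal block coupling the two sectors, then expand $(K^*A^pK)^s$ to second order in a perturbation that activates the off-diagonal block of $A$. The Frechet Hessian of $A\mapsto A^p$ at a block-diagonal $A$ along an off-diagonal direction produces divided-difference expressions in the eigenvalues of $X$ and $Y$, and the resulting Hessian of $\Upsilon_{p,s}$ contains a term proportional to the operator-convexity defect of $x\mapsto x^p$. Since by Hansen--Pedersen the function $x^p$ is operator convex on $\mathcal P_N$ for $N\geq 2$ only when $p\in[-1,0]\cup[1,2]$, this defect can be given a definite bad sign when $p>2$ or $p<-1$, ruling out convexity of $\Upsilon_{p,s}$.

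The main obstacle is this final step: one must extract a negative eigenvalue of the Hessian of the \emph{trace} functional $\tr(K^*A^pK)^s$, not of the operator $K^*A^pK$ itself, and the $s$-power introduces several competing terms which must not swamp the leading divided-difference contribution. Following Hiai \cite[Props.\ 5.1(2), 5.4(2)]{Hi1}, one isolates a specific perturbation direction (off-diagonal and orthogonal to the eigenspaces adapted to $X$ and $Y$) in which the unwanted contributions vanish to leading order, so that the failure of operator convexity of $x^p$ outside $[-1,0]\cup[1,2]$ transfers cleanly to the failure of trace convexity of $\Upsilon_{p,s}$ on $\mathcal P_4$.
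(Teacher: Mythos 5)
Your scalar test and your rank-one test are both correct and are essentially the ones the paper uses: with $K=I$ you get $ps\leq 1$ in the concave case and $ps\leq 0$ or $ps\geq 1$ in the convex case, and the rank-one limit $K=uv^*$ with the commuting perturbation $A=I+tH$, $u^*Hu=0$, yields Hessian $sp(p-1)u^*H^2u$, forcing $0<p\leq 1$ for concavity and $p\leq 0$ or $p\geq 1$ for convexity. This disposes of part (1) completely and of part (2) except for the two remaining bounds $p\geq -1$ and $p\leq 2$. It is precisely there that your proposal has a genuine gap: the sentence ``one isolates a specific perturbation direction \dots\ in which the unwanted contributions vanish to leading order'' is a statement of intent, not an argument. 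A second-order Fr\'echet expansion of $\tr(K^*A^pK)^s$ at a block-diagonal point produces, besides the divided-difference term you want, the $s(s-1)$-terms you yourself worry about, and you give no choice of $X$, $Y$, $K$ and direction for which these are shown to be subordinate. As written, the bounds $-1\leq p$ and $p\leq 2$ are not proved.

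The paper closes this step with a \emph{first-order} expansion in an auxiliary scalar, which avoids the competing terms entirely. Taking $A=\mathrm{diag}(A',B')$ with $A',B'\in\mathcal P_2$ and the block matrix $K=\left(\begin{smallmatrix}1&0\\1&\epsilon\end{smallmatrix}\right)$, the limit $\epsilon\to0$ reduces convexity on $\mathcal P_4$ to joint convexity of $(A',B')\mapsto\tr(A'^p+B'^p)^s$ on $\mathcal P_2\times\mathcal P_2$. For $s\neq1$ one replaces $A'$ by $t^{1/p}A'$ and expands as $t\to0$:
$\tr(tA'^p+B'^p)^s=\tr B'^{ps}+st\,\tr B'^{p(s-1)}A'^p+o(t)$,
so convexity for every $t>0$ forces $A'\mapsto\tr\, C A'^p$ to be convex for every $C\in\mathcal P_2$ (the matrices $C=B'^{p(s-1)}$ sweep all of $\mathcal P_2$ since $p(s-1)\neq0$); that is, $A'\mapsto A'^p$ is operator convex on $2\times2$ matrices, and the case $s=1$ gives this directly. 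The classification of operator convex powers then yields $p\in[-1,0)\cup[1,2]$ with no cancellation to arrange. I recommend replacing your second-order plan by this linearization; with it, the rest of your proposal assembles into a complete proof.
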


This proposition is due to Hiai \cite[Propositions 5.1(1) and 5.4(1)]{Hi1}. Earlier, the first and third authors \cite{CL2} had used a similar argument to show that $\Upsilon_{p,s}$ is not convex or concave for $p>2$. Related arguments also appear in \cite{B}.

\medskip

We conclude this introduction by  explaining why Conjecture \ref{conj} is equivalent to the form in \cite{AD}, which corresponds to taking $A=B$ in our form. Given $A,B\in\mathcal P_N$ and $K\in\mathcal M_N$ let
$$
C_{A,B} = \begin{pmatrix}
A & 0 \\ 0 & B
\end{pmatrix} \in\mathcal P_{2N}
\qquad\text{and}\qquad
L_K = \begin{pmatrix}
0 & K \\ 0 & 0
\end{pmatrix} \in\mathcal M_{2N} \,.
$$
Then
$$
\tr \left( C_{A,B}^{q/2} L_K^* C_{A,B}^p L_K C_{A,B}^{q/2} \right)^s
= \tr \left( B^{q/2} K^* A^p K B^{q/2} \right)^s \,.
$$
Thus, since $(A,B)\mapsto C_{A,B}$ is linear, convexity of $C\mapsto \tr (C^{q/2}L^* C L^p C^{q/2})^s$ on $\mathcal P_{2N}$, implies joint convexity of $(A,B)\mapsto \tr \left( B^{q/2} K^* A^p K B^{q/2} \right)^s$ on $\mathcal P_N\times\mathcal P_N$.


\section{Application: Monotonicity of the $\alpha-z$ relative entropies}

In this section we present an application of Theorem \ref{main} to a problem motivated by quantum information theory. For $\rho,\sigma\in\mathcal P_N$ and $\alpha,z>0$ with $\alpha\neq 1$, we consider the so-called $\alpha-z$-relative R\'enyi entropies
$$
D_{\alpha,z}(\rho\|\sigma) = \frac1{\alpha-1} \ln \frac{\tr\left( \sigma^{(1-\alpha)/(2z)} \rho^{\alpha/z} \sigma^{(1-\alpha)/(2z)}\right)^z}{\tr\rho} \,.
$$
(One also obtains interesting quantities by taking the limit $\alpha\to 1$, possibly simultaneously with $z\to 0$ \cite{AD}, but for the sake of brevity we exclude this case.) These functionals appeared in
\cite[Sec. 3.3]{JOPP} and were further studied  in
\cite{AD}, where the question was raised
whether the $\alpha-z$-relative R\'enyi entropies are monotone under
completely positive, trace preserving maps (CPTP), that is, whether for any such map $\mathcal E$ and any $\rho,\sigma\in\mathcal P_N$ one has
$$
D_{\alpha,z}(\rho\|\sigma) \geq D_{\alpha,z}(\mathcal E(\rho)\|\mathcal E(\sigma)) \,.
$$

\begin{proposition}\label{monoequi}
Let $\alpha,z>0$ with $\alpha\neq 1$ and define
$$
p=\alpha/z
\qquad\text{and}\qquad
q=(1-\alpha)/z \,.
$$
Then $D_{\alpha,z}$ is monotone under CPTP maps on $\mathcal P_N$ for all $N$ if and only if $\Psi_{p,q,1/(p+q)}$ with $K=1$ is jointly convex (if $\alpha>1$) or jointly concave (if $\alpha<1$) on $\mathcal P_N\times\mathcal P_N$ for all $N$.
\end{proposition}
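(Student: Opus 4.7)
The plan is to prove both implications by reducing everything to partial traces acting on block-diagonal states. Throughout, write $z=1/(p+q)$, so that the trace functional appearing in $D_{\alpha,z}$ is exactly $\Psi_{p,q,z}(\rho,\sigma)=\tr(\sigma^{q/2}\rho^p\sigma^{q/2})^z$ with $K=1$. Two structural facts drive the proof: (i) $\Psi_{p,q,z}$ is \emph{multiplicative on tensor products}, $\Psi_{p,q,z}(X\otimes Y,X'\otimes Y')=\Psi_{p,q,z}(X,X')\,\Psi_{p,q,z}(Y,Y')$, because $(A\otimes B)^z=A^z\otimes B^z$ and the trace is multiplicative on tensor products; (ii) $\Psi_{p,q,z}$ is \emph{jointly homogeneous of degree one}, $\Psi_{p,q,z}(tA,tB)=t^{z(p+q)}\Psi_{p,q,z}(A,B)=t\,\Psi_{p,q,z}(A,B)$. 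Both rest on the identity $z(p+q)=1$.

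For the direction monotonicity $\Rightarrow$ joint convexity/concavity, given $A_1,\dots,A_n,B_1,\dots,B_n\in\mathcal P_N$ and weights $\lambda_i>0$ with $\sum_i\lambda_i=1$, I would form the block-diagonal states $\rho=\bigoplus_i\lambda_i A_i$ and $\sigma=\bigoplus_i\lambda_i B_i$ in $\mathcal P_{nN}$ and take $\mathcal E$ to be the (manifestly CPTP) partial trace over the block-index register. A direct block-by-block computation, using $z(p+q)=1$ to absorb the $\lambda_i$-factors, gives $\Psi_{p,q,z}(\rho,\sigma)=\sum_i\lambda_i\Psi_{p,q,z}(A_i,B_i)$ and $\Psi_{p,q,z}(\mathcal E\rho,\mathcal E\sigma)=\Psi_{p,q,z}(\sum_i\lambda_i A_i,\sum_i\lambda_i B_i)$, while $\tr\rho=\tr\mathcal E\rho$. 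Exponentiating the assumed monotonicity of $D_{\alpha,z}$ and tracking the sign of $1/(\alpha-1)$ converts it into joint convexity of $\Psi_{p,q,z}$ when $\alpha>1$ and joint concavity when $\alpha<1$.

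For the reverse direction, I would use the Stinespring representation $\mathcal E(\rho)=\tr_{\mathcal K}(V\rho V^*)$ with $V$ an isometry, and split the verification in two. \emph{Isometric invariance of $D_{\alpha,z}$}: from $V^*V=1$ a functional-calculus argument gives $(V\sigma V^*)^t=V\sigma^t V^*$ on $V\mathcal H$ for every real $t$, whence $(V\sigma V^*)^{q/2}(V\rho V^*)^p(V\sigma V^*)^{q/2}=V(\sigma^{q/2}\rho^p\sigma^{q/2})V^*$ and $\Psi_{p,q,z}(V\rho V^*,V\sigma V^*)=\Psi_{p,q,z}(\rho,\sigma)$, with trace also preserved. \emph{Monotonicity under partial trace}: combining the twirling identity $(\tr_{\mathcal K}\rho)\otimes(1_{\mathcal K}/d_{\mathcal K})=\int_{U(d_{\mathcal K})}(1\otimes U)\rho(1\otimes U^*)\,dU$ with the obvious unitary invariance of $\Psi_{p,q,z}$, the hypothesized joint convexity yields
\[
\Psi_{p,q,z}\bigl((\tr_{\mathcal K}\rho)\otimes(1_{\mathcal K}/d_{\mathcal K}),\,(\tr_{\mathcal K}\sigma)\otimes(1_{\mathcal K}/d_{\mathcal K})\bigr)\leq \Psi_{p,q,z}(\rho,\sigma)\,.
\]
By multiplicativity and the normalization $\Psi_{p,q,z}(1_{\mathcal K}/d_{\mathcal K},1_{\mathcal K}/d_{\mathcal K})=1$ (again from $z(p+q)=1$), the left-hand side is simply $\Psi_{p,q,z}(\tr_{\mathcal K}\rho,\tr_{\mathcal K}\sigma)$, and taking logarithms, dividing by $\alpha-1$ with attention to its sign, and using $\tr\tr_{\mathcal K}\rho=\tr\rho$, gives $D_{\alpha,z}(\tr_{\mathcal K}\rho\|\tr_{\mathcal K}\sigma)\leq D_{\alpha,z}(\rho\|\sigma)$. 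The case $\alpha<1$ is symmetric, using joint concavity in place of joint convexity.

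The only mildly delicate point is the identity $(V\sigma V^*)^t=V\sigma^t V^*$ on $V\mathcal H$ for negative $t$, where $V\sigma V^*$ is only invertible on its range; this is settled by applying the spectral theorem to $V\sigma V^*$ restricted to its support, or equivalently by first noting $(V\sigma V^*)^n=V\sigma^n V^*$ for $n\in\mathbb N$ (immediate from $V^*V=1$) and then extending via polynomial approximation plus the Moore--Penrose inverse. Everything else is bookkeeping, and the argument ultimately rests on the fortunate coincidence that the exponent $z=1/(p+q)$ is precisely the one making $\Psi_{p,q,z}$ homogeneous of degree one and multiplicative under tensor products, exactly what is needed to pass back and forth between convexity/concavity of $\Psi_{p,q,z}$ and monotonicity of $D_{\alpha,z}$ without any additive correction terms.
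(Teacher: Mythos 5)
Your proposal is correct and follows essentially the same route as the paper: block-diagonal states plus a partial trace for the ``monotonicity $\Rightarrow$ convexity/concavity'' direction, and Stinespring dilation combined with Uhlmann's twirling identity, the tensor-multiplicativity of $\Psi_{p,q,1/(p+q)}$, and unitary (here phrased as isometric) invariance for the converse. The only cosmetic difference is that you package the dilation step as isometric invariance followed by monotonicity under partial trace, while the paper runs the same computation in one pass with $\mathcal E(\gamma)=\tr_2 U(\gamma\otimes\tau)U^*$.
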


\begin{proof}
For simplicity of notation, we write $\Psi:=\Psi_{p,q,1/(p+q)}$ with $K=1$ in the following. Clearly, $D_{\alpha,z}$ is monotone under CPTP maps if and only if $\Psi$ is monotone decreasing (if $\alpha>1$) or increasing (if $\alpha<1$) under CPTP maps. Therefore the proposition follows from what we prove in the following two steps.

\emph{Step 1.} We show that for $\alpha>1$, if $\Psi$ is monotone decreasing under CPTP maps on $\mathcal P_{2N}$, then $\Psi$ is jointly convex on $\mathcal P_N\times\mathcal P_N$. A similar assertion holds for $\alpha<1$ if decreasing and convex are replaced by increasing and concave.

We give the proof for $\alpha>1$. For $\rho_0,\rho_1,\sigma_0,\sigma_1\in\mathcal P_N$ and $\theta\in[0,1]$ we consider the operators
$$
\rho = (1-\theta)\rho_0\otimes |\uparrow\rangle\langle\uparrow| + \theta \rho_1\otimes |\downarrow\rangle\langle\downarrow|
\qquad\text{and}\qquad
\sigma = (1-\theta)\sigma_0\otimes |\uparrow\rangle\langle\uparrow| + \theta \sigma_1\otimes |\downarrow\rangle\langle\downarrow|
$$
on $\C^N\otimes\C^2$, where $|\uparrow\rangle,|\downarrow\rangle$ denote a basis of $\C^2$. Applying the assumed monotonicity with the channel that takes the partial trace over $\C^2$, we conclude that
$$
\Psi((1-\theta)\rho_0+\theta\rho_1,(1-\theta)\sigma_0+\theta\sigma_1) \leq (1-\theta) \Psi(\rho_0,\sigma_0) + \theta \Psi(\rho_1,\sigma_1) \,,
$$
which means that $\Psi$ is jointly convex.

\emph{Step 2.} We show that for $\alpha>1$, if $\Psi$ is jointly convex on $\mathcal P_{N^2}\times\mathcal P_{N^2}$, then $\Psi$ is monotone decreasing under CPTP maps on $\mathcal P_N$. A similar assertion holds for $\alpha<1$ if convex and decreasing are replaced by concave and increasing.

Again, we assume $\alpha>1$. Following a method of Lindblad and Uhlmann, we use Stinespring's theorem \cite{St} to obtain an integer $N'\leq N^2$, a non-negative matrix $\tau\in\mathcal M_{N'}$ with $\tr\tau=1$ (which can be chosen to be rank one) and a unitary $U$ on $\C^N\otimes\C^{N'}$ such that
$$
\mathcal E(\gamma) = \tr_2 U \left(\gamma\otimes \tau \right) U^* \,.
$$
Thus, if $du$ denotes normalized Haar measure on all unitaries on $\C^{N'}$, then
\begin{equation}
\label{eq:uhlmann}
\mathcal E(\gamma) \otimes (N')^{-1} 1_{\C^{N'}} = \int (1\otimes u) U \left(\gamma\otimes \tau \right) U^* (1\otimes u^*) \,du \,.
\end{equation}
By the tensor property of $\Psi$,
\begin{equation} \label{two}
\Psi(\mathcal E(\rho),\mathcal E(\sigma))
= \Psi(\mathcal E(\rho)\otimes (N')^{-1} 1_{\C^{N'}}\,, \mathcal E(\sigma)\otimes (N')^{-1} 1_{\C^{N'}}) \,.
\end{equation}
By \eqref{eq:uhlmann} and the assumed convexity the expression in \eqref{two} is bounded from above by
\begin{align*}
& \int \Psi( (1\otimes u) U \left(\rho\otimes \tau \right) U^* (1\otimes u^*) , (1\otimes u) U \left(\sigma\otimes \tau \right) U^* (1\otimes u^*) ) \,du \,.
\end{align*}
By unitary invariance and the tensor property of $\Psi$, the integrand here equals
$$
\Psi( (1\otimes u) U \left(\rho\otimes \tau \right) U^* (1\otimes u^*), (1\otimes u) U \left(\sigma\otimes \tau \right) U^* (1\otimes u^*) )
= \Psi( \rho\otimes \tau, \sigma\otimes \tau )
= \Psi( \rho, \sigma ) \,,
$$
and therefore, recalling that Haar measure is normalized, we conclude that
$$
\Psi(\mathcal E(\rho),\mathcal E(\sigma)) \leq \Psi( \rho, \sigma ) \,,
$$
which means that $\Psi$ is monotone decreasing under CPTP maps.
\end{proof}

Combining this proposition with Theorem \ref{main} and Proposition \ref{mainnec} we obtain the following monotonicity result.

\begin{corollary}
$D_{\alpha,z}$ is monotone under completely positive, trace preserving maps if one of the following holds
\begin{align*}
& 0<\alpha< 1 \ \text{and}\ z\geq\max\{\alpha,1-\alpha\} \,,\\
& 1<\alpha\leq 2 \ \text{and}\ z\in\{\alpha/2,1,\alpha\} \,,\\
& 2\leq\alpha\leq\infty \ \text{and}\ z=\alpha \,.
\end{align*}
Conversely, if $D_{\alpha,z}$ is monotone under completely positive, trace preserving maps, then one of the following holds
\begin{align*}
& 0<\alpha< 1 \ \text{and}\ z\geq\max\{\alpha,1-\alpha\} \,,\\
& 1<\alpha\leq 2 \ \text{and}\ \alpha/2\leq z\leq\alpha \,,\\
& 2\leq\alpha<\infty \ \text{and}\ \alpha-1\leq z\leq\alpha \,.
\end{align*}
\end{corollary}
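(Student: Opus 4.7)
The plan is to invoke Proposition \ref{monoequi} to translate the question into joint convexity/concavity of $\Psi_{p,q,s}$ with $K=1$, and then to match the resulting parameters against Theorem \ref{main} (augmented by Ando's classical $s=1$ result recorded in item (0) of the history) for the sufficient direction, and against Proposition \ref{mainnec} for the necessary direction. Throughout I would set $p=\alpha/z$, $q=(1-\alpha)/z$, and $s=1/(p+q)=z$; by Proposition \ref{monoequi}, monotonicity of $D_{\alpha,z}$ under CPTP maps is then equivalent to joint concavity (when $\alpha<1$) or joint convexity (when $\alpha>1$) of $\Psi_{p,q,s}$ with $K=1$. Note that $p>0$ always, while $q$ has the sign of $1-\alpha$, so the relative positions of $p$ and $q$ in the $(p,q)$-plane are determined by the sign of $\alpha-1$.

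For sufficiency I would dispatch each regime of the corollary in turn. When $\alpha\in(0,1)$, the constraint $z\geq\max\{\alpha,1-\alpha\}$ becomes exactly $p,q\in(0,1]$, and since $s=1/(p+q)$ saturates the bound in Theorem \ref{main}(1), concavity follows. For $\alpha>1$ I would check the three special choices of $z$: the choice $z=\alpha$ gives $p=1$, $q=1/\alpha-1\in[-1,0)$, and $s=\alpha=1/(q+1)$, so the $p=1$ endpoint of Theorem \ref{main}(3) applies; the choice $z=\alpha/2$ (restricted to $1<\alpha\leq 2$ to keep $q\geq -1$) gives $p=2$, $q=2/\alpha-2$, and $s=\alpha/2=1/(2+q)$, triggering the $p=2$ branch of Theorem \ref{main}(3); and the choice $z=1$ (restricted to $1<\alpha\leq 2$) gives $p=\alpha$, $q=1-\alpha$, $p+q=1$, and $s=1$. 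This last case falls outside Theorem \ref{main}(3) as stated, since $\min\{1/(p-1),1/(q+1)\}>1$ for $\alpha\in(1,2)$, and must instead be pulled from Ando's $s=1$ convexity theorem ($1\leq p\leq 2$, $-1\leq q\leq 0$, $p+q\geq 1$) from item (0) of the history. This small detour through Ando's theorem is the one subtle point of the argument; otherwise the sufficient direction reduces to mechanical parameter matching.

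For the necessary direction I would simply translate Proposition \ref{mainnec} back to the $(\alpha,z)$ variables. For $\alpha<1$, Proposition \ref{mainnec}(1) (after the harmless $p\leftrightarrow q$ swap supplied by \eqref{eq:commute}) forces $p,q\in[0,1]$; the companion bound $s\leq 1/(p+q)$ holds automatically here, and the condition translates to $z\geq\max\{\alpha,1-\alpha\}$. For $\alpha>1$, Proposition \ref{mainnec}(2) offers two alternatives, the first of which ($-1\leq q\leq p\leq 0$) is incompatible with $p>0$; so we must land in the second, giving $1\leq p\leq 2$, $-1\leq q\leq 0$, $(p,q)\neq(1,-1)$, together with $s\geq 1/(p+q)$ (automatic since $s=1/(p+q)$). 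In the $(\alpha,z)$ variables these read $\alpha/2\leq z\leq\alpha$ together with $z\geq\alpha-1$. Intersecting yields $\alpha/2\leq z\leq\alpha$ for $1<\alpha\leq 2$ and $\alpha-1\leq z\leq\alpha$ for $\alpha\geq 2$, exactly matching the asserted necessary conditions.
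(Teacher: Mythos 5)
Your proposal is correct and follows essentially the same route the paper intends: Proposition \ref{monoequi} reduces monotonicity to joint convexity/concavity of $\Psi_{p,q,s}$ with $p=\alpha/z$, $q=(1-\alpha)/z$, $s=z=1/(p+q)$, and the parameter matching against Theorem \ref{main} and Proposition \ref{mainnec} is carried out accurately, including the use of the symmetry \eqref{eq:commute} to arrange $p\geq q$. Your observation that the case $1<\alpha<2$, $z=1$ (i.e.\ $s=1$, $p+q=1$) falls outside the literal statement of Theorem \ref{main}(3) and must be supplied by Ando's convexity theorem (item (0) of the history) is a correct and worthwhile refinement of the paper's terse ``combining'' remark.
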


If Conjecture \ref{conj} is true, then monotonicity holds also in the cases $1<\alpha\leq 2$, $\alpha/2<z<\alpha$, as well as $2\leq\alpha<\infty$, $\alpha-1\leq z<\alpha$, that is, in the full range allowed by the second part of the corollary.

\begin{remark} Hiai and Mosonyi have made further progress \cite{HM} on the conjecture under the additional assumption that the CPTP map in question is unital and preserves either $\rho$ or $\sigma$. 

\end{remark}


\section{The complex analysis method}\label{sec:epstein}

In this section we show part (1) of Theorem \ref{main}, namely that 
$$
(A,B) \mapsto \tr\left( B^{q/2} K^* A^p K B^{q/2} \right)^s
$$
is jointly concave for $0\leq q\leq p\leq 1$ and $0<s\leq 1/(p+q)$. Note that this also proves part (1) of Theorem \ref{ups}.

This was shown under the extra assumption $s\geq 1/2$ in \cite{Hi1} using the complex analysis method and under the extra assumption $s\leq 1/(1+q)$ in \cite[Theorem 4.4]{CFL} using the variational method. Here we follow Hiai's proof \cite[Theorem 2.1]{Hi2}, using the complex analysis method, which works for the full range of exponents.

\begin{proposition}\label{hiaiconc}
Let $0\leq p,q\leq 1$. Then
$$
(A,B) \mapsto \tr \left(1+ \left(B^{q/2} K^* A^p K B^{q/2}\right)^{-1/(p+q)}\right)^{-1} 
$$
is jointly concave.
\end{proposition}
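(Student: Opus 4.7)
\emph{Plan.} The approach is Hiai's complex analysis (Epstein) method. Write $r := 1/(p+q)$ and $T(A,B) := B^{q/2} K^* A^p K B^{q/2}$, so the object of study is $\Phi(A,B) = \tr (1 + T(A,B)^{-r})^{-1}$. By continuity, joint concavity is equivalent to concavity of the scalar function $t \mapsto g(t) := \Phi(A(t), B(t))$ on $[0,1]$ for every pair of affine segments $A(t) = (1-t)A_0 + t A_1$, $B(t) = (1-t)B_0 + t B_1$ with endpoints in $\mathcal P_N$. Replacing $t$ by a complex variable $z \in \C$, the matrices $A(z), B(z)$ remain invertible on an open complex neighborhood of $[0,1]$, so $A(z)^p$ and $B(z)^{q/2}$ are defined by the holomorphic functional calculus (principal branch of $w \mapsto w^s$), and $g(z)$ is analytic there.

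\emph{Extension to $\C_+$ and Pick property.} The core step is to extend $g$ to a Pick (Herglotz) function on the open upper half plane $\C_+$, i.e., an analytic function with $\mathrm{Im}\, g(z) \geq 0$. The principal matricial input is: for $s \in [0,1]$, the map $X \mapsto X^s$ sends the matricial upper half plane $\{X : \mathrm{Im}\, X > 0\}$ into itself; in particular $A(z)^p$ and $B(z)^{q/2}$ have positive imaginary part whenever $A(z), B(z)$ do. The sandwich $T(z) = B(z)^{q/2} K^* A(z)^p K B(z)^{q/2}$ does \emph{not} itself preserve this matricial property, but I expect that after applying the M\"obius wrapper $X \mapsto (1 + X^{-r})^{-1} = X^r/(X^r+1)$ and taking trace, the Pick property is restored at the scalar level. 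I would verify this via the integral representations
\[
X^s = \frac{\sin \pi s}{\pi} \int_0^\infty \lambda^{s-1} X(X + \lambda)^{-1} \, d\lambda \qquad (s \in (0,1))
\]
applied to each fractional power occurring in $g(z)$, thereby rewriting $g(z)$ as a positive superposition of traces of products of the resolvents $(A(z)+\lambda)^{-1}$ and $(B(z)+\mu)^{-1}$, whose Pick behavior on $\C_+$ is manifest. Once $g$ is shown to be Pick on $\C_+$ and real on $(0,1)$, the Nevanlinna--Herglotz representation expresses $g$ as
\[
g(z) = \alpha + \beta z + \int_\R \Bigl( \frac{1}{\lambda - z} - \frac{\lambda}{1+\lambda^2} \Bigr) d\nu(\lambda),
\]
with $\alpha \in \R$, $\beta \geq 0$, and $\nu$ a positive Borel measure supported off $(0,1)$. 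Differentiating twice in $z$ under the integral then yields $g''(t) \leq 0$ on $(0,1)$, i.e., concavity.

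\emph{Main obstacle.} The subtle part is the Pick property. Since the bilinear sandwich $B^{q/2} K^* A^p K B^{q/2}$ does not preserve the matricial upper half plane even when each factor $A^p$, $B^{q/2}$ does, one cannot simply compose Pick maps at the matrix level. The specific exponent $r = 1/(p+q)$ is chosen precisely so that the combined scaling of $T^r$ at $0$ and at $\infty$ is compatible with the M\"obius map $X/(X+1)$, which is what restores the Pick property after tracing. Verifying $\mathrm{Im}\, g(z) \geq 0$ then amounts, after unpacking the resolvent integral representations, to an integrated positivity of trace forms related to those underlying Lieb's concavity theorem; managing the behavior of the nested integrals at $0$ and $\infty$ and combining the resulting positivities into a single nonnegativity statement is where the main technical effort lies.
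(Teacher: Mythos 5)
Your proposal correctly identifies the method (Epstein's complex-analysis technique, as implemented by Hiai), but two of its central steps are wrong as stated and the third is left undone. First, the affine parametrization $A(z)=(1-z)A_0+zA_1$ is not compatible with the matricial upper half plane: $\im A(z)=(\im z)(A_1-A_0)$, and $A_1-A_0$ is merely Hermitian, so $A(z)^p$ is not even obviously defined for $z\in\C_+$, let alone of positive imaginary part, unless $A_1>A_0$. The paper instead fixes $C,D>0$ and Hermitian $G,H$, aims at concavity of $\xi\mapsto\Phi(C+\xi G,D+\xi H)$ for small $\xi>0$, and exploits the degree-one homogeneity of $(A,B)\mapsto (B^{q/2}K^*A^pKB^{q/2})^{1/(p+q)}$ to write this function as $\xi\phi(1/\xi)$ with
\[
\phi(x)=\tr\Bigl(x^{-1}+\bigl((xD+H)^{q/2}K^*(xC+G)^pK(xD+H)^{q/2}\bigr)^{-1/(p+q)}\Bigr)^{-1},
\]
so that now $\im(xC+G)=(\im x)\,C>0$ for $x\in\C_+$ because the positive matrix multiplies the complex variable. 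Second, your concluding step is false: a Pick function that is real on $(0,1)$ need not be concave there (for example $-\sqrt{1-z}$ is Pick, real on $(-\infty,1)$, and convex). The correct statement, Lemma \ref{epstein} of the paper, is that if $\phi$ is Pick and real on a neighborhood $(R,\infty)$ of $+\infty$ — so that the Nevanlinna measure is supported in $(-\infty,R]$ — then $\xi\mapsto\xi\phi(1/\xi)$ is concave on $(0,1/R)$; both the prefactor $\xi$ and the location of the reality interval at infinity are essential.

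Finally, the step you flag as the ``main obstacle,'' the Pick property of the trace, is precisely what you have not supplied, and the route you sketch (superpositions of resolvents plus an ``integrated positivity of trace forms related to Lieb's theorem'') is not how the argument goes and has no evident way to close. The actual mechanism is spectral: for $z\in\C_+$ one has $(zC+G)^p\in\mathcal S_{0,p\pi}$ and $(zD+H)^{q/2}\in\mathcal S_{0,q\pi/2}$, and a lemma of Epstein shows that the \emph{spectrum} of the product $F(z)=(zD+H)^{q/2}K^*(zC+G)^pK(zD+H)^{q/2}$ lies in the sector $S_{0,(p+q)\pi}$ even though the product does not preserve the matricial upper half plane. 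Hence $F(z)^{-1/(p+q)}$ has spectrum in $\C_-$, so $z^{-1}+F(z)^{-1/(p+q)}$ is invertible with inverse whose spectrum lies in $\C_+$, and the trace automatically has nonnegative imaginary part — no joint positivity of trace forms is needed. This also explains the exponent $1/(p+q)$: it is simultaneously what makes the sandwich homogeneous of degree one (so Epstein's lemma applies) and what maps the sector $S_{0,(p+q)\pi}$ into the lower half plane.
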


Before proving this proposition we use it to deduce the concavity part of Theorem~\ref{main}.

\begin{proof}[Proof of Theorem \ref{main}. (1)]
Multiplying $A$ by a power of $t$ we deduce from Proposition \ref{hiaiconc} that
$$
(A,B) \mapsto \tr \left(1+ t \left(B^{q/2} K^* A^p K B^{q/2}\right)^{-1/(p+q)}\right)^{-1} 
$$
is jointly concave for any $t>0$. Multiplying by $t$ and letting $t\to\infty$ we deduce that
$$
(A,B)\mapsto \tr \left(B^{q/2} K^* A^p K B^{q/2}\right)^{1/(p+q)} 
$$
is jointly concave. This is the assertion for $s=1/(p+q)$.

Now let $\sigma:=s(p+q)<1$. Then
$$
x^\sigma = \frac{\sin(\pi\sigma)}{\pi} \int_0^\infty \frac{1}{1+t/x}\ t^{-1+\sigma}\,dt
\qquad\text{for all}\ x\geq 0
$$
and therefore
$$
\tr \left( B^{q/2} K^*A^p K B^{q/2} \right)^s = \frac{\sin(\pi\sigma)}{\pi} \int_0^\infty \tr\left( 1+ t \left( B^{q/2} K^* A^p K B^{q/2} \right)^{-1/(p+q)} \right)^{-1} t^{-1+\sigma}\,dt \,.
$$
As observed at the beginning of the proof, the integrand is jointly concave and therefore so is $\tr \left( B^{q/2} K^*A^p K B^{q/2} \right)^s$. This concludes the proof of part (1) of Proposition \ref{main}.
\end{proof}

We now turn to Hiai's proof of Proposition \ref{hiaiconc}. It is based on Epstein's method, which relies on the following lemma from complex analysis.

\begin{lemma}\label{epstein}
Let $\phi$ be analytic in $\C_+$ with $\im\phi\geq 0$ and assume that it extends continuously to a real function on $(R,\infty)$ for some $R>0$. Then
$$
\frac{d^2}{d\xi^2} \left( \xi \phi(1/\xi) \right) \leq 0
\qquad\text{for all}\ \xi\in (0,1/R) \,.
$$ 
\end{lemma}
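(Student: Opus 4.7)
The plan is to invoke the Nevanlinna--Herglotz representation theorem for functions analytic in $\C_+$ with nonnegative imaginary part: any such $\phi$ may be written as
$$
\phi(z) = \alpha + \beta z + \int_\R \left( \frac{1}{t-z} - \frac{t}{1+t^2} \right) d\mu(t),
$$
with $\alpha\in\R$, $\beta\geq 0$, and $\mu$ a positive Borel measure on $\R$ satisfying $\int (1+t^2)^{-1}\,d\mu(t)<\infty$. The first step is to use the hypothesis that $\phi$ extends continuously to a \emph{real} function on $(R,\infty)$, which via the Stieltjes inversion formula (equivalently, via Schwarz reflection applied near the real axis) forces $\mu$ to be supported in $(-\infty,R]$.

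Next I would substitute $z=1/\xi$ for $\xi\in(0,1/R)$, multiply by $\xi$, and use the algebraic identity $\xi/(t-1/\xi)=-\xi^2/(1-\xi t)$ to rewrite the representation as
$$
\xi\,\phi(1/\xi) = \beta + \xi\left(\alpha - \int \frac{t}{1+t^2}\,d\mu(t)\right) - \int_{(-\infty,R]} \frac{\xi^2}{1-\xi t}\,d\mu(t).
$$
The first two terms on the right are affine in $\xi$ and contribute nothing to $d^2/d\xi^2$. A short direct calculation (differentiating twice in $\xi$ and simplifying) yields
$$
\frac{d^2}{d\xi^2}\left(\frac{\xi^2}{1-\xi t}\right) = \frac{2}{(1-\xi t)^3}.
$$
For $\xi\in(0,1/R)$ and $t\in(-\infty,R]$ we have $1-\xi t\geq 1-\xi R>0$, so this pointwise second derivative is strictly positive, and hence
$$
\frac{d^2}{d\xi^2}\bigl(\xi\,\phi(1/\xi)\bigr) = -\int_{(-\infty,R]} \frac{2}{(1-\xi t)^3}\,d\mu(t) \leq 0,
$$
provided differentiation under the integral is legitimate.

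The main technical issue is justifying that differentiation under the integral sign; this is where the summability condition on $\mu$ must be used. For $\xi$ in a compact subinterval of $(0,1/R)$ the integrand $(1-\xi t)^{-3}$ is bounded on any compact $t$-set and decays like $|t|^{-3}$ as $t\to-\infty$ on $\supp\mu$, so pairing against the finite measure $(1+t^2)^{-1}\,d\mu(t)$ via the elementary bound $(1+t^2)|t|^{-3}=O(|t|^{-1})$ produces an integrable dominant. Aside from this routine check and the initial identification of the support of $\mu$, the entire argument is an elementary sign computation; the nontrivial input is just the Herglotz representation itself.
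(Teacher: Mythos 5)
Your proof is correct and follows essentially the same route as the paper's (Hiai's rendition of Epstein's argument): the Nevanlinna--Herglotz representation, localization of $\supp\mu$ to $(-\infty,R]$ via Stieltjes inversion, the substitution $z=1/\xi$, and the observation that the second derivative reduces to $2\int(\xi t-1)^{-3}\,d\mu(t)\leq 0$. One cosmetic caveat: since only $\int(1+t^2)^{-1}\,d\mu(t)<\infty$ is guaranteed, the term $\int t(1+t^2)^{-1}\,d\mu(t)$ that you split off as a separate integral may diverge, so one should keep the integrand $\xi^2/(\xi t-1)-\xi t/(t^2+1)$ together as the paper does; this does not affect the second-derivative computation, since that linear-in-$\xi$ piece is annihilated by $d^2/d\xi^2$ in either formulation.
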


\begin{proof}[Proof of Lemma \ref{epstein}]
By the Nevanlinna theorem \cite{Don} there are numbers $a\geq 0$ and $b\in\R$ and a non-negative measure $\mu$ with $\int_\R (t^2+1)^{-1}\,d\mu(t)<\infty$ such that
$$
\phi(z) = a z + b + \int_\R \left( \frac{1}{t-z} - \frac{t}{t^2+1} \right)\,d\mu(t)
\qquad\text{for all}\ z\in\C_+ \,.
$$
Since
$$
d\mu(t) = \pi^{-1} \text{w}-\lim_{\epsilon\to 0+} \im\phi(t+i\epsilon) dt
$$
and since $\phi$ is assumed real on $(R,\infty)$, we infer that $\supp\mu\subset(-\infty,R]$. Using this fact one can show that the above representation formula for $\phi(z)$ is also valid for $z\in (R,\infty)$ and, therefore,
$$
\xi\phi(1/\xi) = a + b\xi + \int_\R \left( \frac{\xi^2}{\xi t-1} - \frac{\xi t}{t^2+1} \right)\,d\mu(t)
\qquad\text{for all}\ \xi\in (0,1/R) \,.
$$
By dominated convergence it follows that $\xi\mapsto\xi\phi(1/\xi)$ is smooth on $(0,R)$ and that
$$
\frac{d^2}{d\xi^2} \left( \xi \phi(1/\xi) \right) = 2 \int_\R \frac{d\mu(t)}{(\xi t-1)^3}
\qquad\text{for all}\ \xi \in (0,1/R) \,.
$$
Since $(\xi t-1)^{-3}\leq 0$ for $\xi<1/R$ and $t\in\supp\mu\subset(-\infty,R]$, we obtain the claimed concavity.
\end{proof}

\begin{proof}[Proof of Proposition \ref{hiaiconc}]
By an approximation argument we may assume that $K$ is invertible. Let $C$ and $D$ be positive definite and $G$ and $H$ Hermitian. We will show that
$$
\frac{d^2}{d\xi^2} \tr\left( 1+ \left( (D+\xi H)^{q/2} K^* (C+\xi G)^p K (D+\xi H)^{q/2} \right)^{-1/(p+q)} \right)^{-1} \leq 0
$$
for all sufficiently small $\xi>0$, which will prove the proposition. To achieve this, we write
$$
\tr\left( 1+ \left( (D+\xi H)^{q/2} K^* (C+\xi G)^p K (D+\xi H)^{q/2} \right)^{-1/(p+q)} \right)^{-1} = \xi \phi(1/\xi)
$$
with
\begin{equation}
\label{eq:defphireal}
\phi(x) := \tr\left( x^{-1} + \left( (x D+H)^{q/2} K^* (xC+G)^p K (xD+H)^{q/2} \right)^{-1/(p+q)} \right)^{-1}
\end{equation}
and appeal to Lemma \ref{epstein}. Thus we need to show that $\phi$ is well-defined on $(R,\infty)$ for some $R>0$ and has an analytic extension to $\C_+$ with non-negative imaginary part.

Let us introduce some notation. For $-\pi\leq\alpha<\beta\leq\pi$ we set
$$
S_{\alpha,\beta}:=\left\{ r e^{i\theta}\in\C :\ r>0\ ,\ \alpha<\theta<\beta \right\}
$$
and (dropping the dimension $N$ from the notation)
$$
\mathcal S_{\alpha,\beta} := \left\{ M\in\mathcal M_N:\  \im \left( e^{-i\alpha} M \right) >0 \,,\ \im\left( e^{-i\beta} M \right)< 0 \right\}.
$$

For $z\in\C_+$ we define
$$
F(z) := (z D + H)^{q/2} K^* (zC+G)^p K (zD+H)^{q/2} \,.
$$
This is well-defined since $\im (zC+G)=(\im z) C >0$ and similarly $\im (zD+H)>0$, and for such matrices the $p$-th and $q/2$-th root are well-defined by analytic functional calculus. Moreover, $F$ is analytic in $\C_+$.

Since $\im(zC+G)>0$ and $\im(zD+H)>0$, one has (see, for instance, \cite[Lemma 1]{Ep})
$$
(zC+G)^p \in\mathcal S_{0,p\pi}
\qquad\text{and}\qquad
(zD+H)^{q/2} \in\mathcal S_{0,q\pi/2} \,,
$$
and, since $K$ is invertible, also $K^*(zC+G)^p K \in\mathcal S_{0,p\pi}$. Therefore (see, for instance, \cite[Lemma 2]{Ep} or, for a simpler proof, \cite[Lemma 10]{AD0})
$$
\spec F(z) \in S_{0,(p+q)\pi} \,.
$$
By the analytic functional calculus we can define $F(z)^{-1/(p+q)}$ for $z\in\C_+$ and by the spectral mapping theorem we obtain
$$
\spec F(z)^{-1/(p+q)} \in \C_- \,.
$$
Therefore $z^{-1} + F(z)^{-1/(p+q)}$ is invertible for $z\in\C_+$ and
$$
\spec\left( z^{-1} + F(z)^{-1/(p+q)} \right)^{-1} \in\C_+ \,.
$$
This proves that
$$
\phi(z) := \tr \left( z^{-1} + F(z)^{-1/(p+q)} \right)^{-1}
$$
is analytic in $\C_+$ and that $\im\phi(z)\geq 0$ for $z\in\C_+$.

Let $x>\max\{ \lambda_{\max}(G)/\lambda_{\min}(C),\lambda_{\max}(H)/\lambda_{\min}(D)\}=:R$, where $\lambda_{\max/\min}(M)$ denote the largest and smallest eigenvalue of a Hermitian matrix $M$. Then $F$ extends continuously from $\C_+$ to $(R,\infty)$ and $F(x)$ is a positive definite matrix for $x>R$. Therefore $\phi$ extends continuously from $\C_+$ to $(R,\infty)$ and the continuation is given by the right side of \eqref{eq:defphireal}. Note that $\phi(x)$ is real (in fact, positive) for $x>R$.
\end{proof}


\section{The variational method}\label{sec:var}

In this section we show parts (2) and (3) of Theorem \ref{main}, namely that 
$$
(A,B) \mapsto \tr\left( B^{q/2} K^* A^p K B^{q/2} \right)^s
$$
is jointly convex for $-1\leq q\leq p\leq 0$ and $s>0$ and for $1\leq p\leq 2$, $-1\leq q\leq 0$ and $s\geq\min\{1/(p-1),1/(q+1)\}$. We also prove parts (2) and (3) of Proposition \ref{ups}.

We begin with the proof of Proposition \ref{ups}, since this is simpler and since this will also be needed in the proof of Theorem \ref{main}. We use a variational  method which originates in the work \cite{CL2}. It is based on two ingredients. The first one is a variational characterization of the trace of a power.

\begin{lemma}
If $X\in\mathcal P_N$, then
\begin{equation}
\label{eq:var1}
\tr X^s = \sup_{Y\geq 0} \left( s \tr X Y - (s-1) \tr Y^{s/(s-1)} \right)
\qquad\text{if}\ s> 1 \ \text{or}\ s<0 \,,
\end{equation}
\begin{equation}
\label{eq:var2}
\tr X^s = \inf_{Y>0} \left( s \tr X Y + (1-s) \tr Y^{-s/(1-s)} \right)
\qquad\text{if}\ 0<s<1 \,.
\end{equation}
\end{lemma}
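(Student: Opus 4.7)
The plan is a direct calculus-of-variations argument, handled in parallel for the three parameter regimes. In each case denote by $F_s(Y)$ the functional on the right-hand side, viewed as a function of $Y>0$ with $X\in\mathcal P_N$ fixed. First I would compute the gradient: a short differentiation yields $F_s'(Y)=s(X-Y^{1/(s-1)})$ in the cases $s>1$ and $s<0$, and $F_s'(Y)=s(X-Y^{-1/(1-s)})$ in the case $0<s<1$. In all three regimes the unique critical point in $\mathcal P_N$ is $Y=X^{s-1}$, and substitution gives $F_s(X^{s-1})=s\tr X^s\mp(s-1)\tr X^s=\tr X^s$ (the two power contributions share the same exponent and collapse).

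Second, I would show that $F_s$ is globally concave on $\mathcal P_N$ when $s>1$ or $s<0$, and globally convex when $0<s<1$, so that the critical point is the global extremum. The only nontrivial ingredient is the well-known fact that $Y\mapsto\tr Y^r$ is convex on $\mathcal P_N$ for $r\geq 1$ or $r\leq 0$ and concave for $0\leq r\leq 1$, which is an immediate consequence of the operator convexity/concavity of $t\mapsto t^r$. A short sign check dispatches each regime: for $s>1$ the exponent $s/(s-1)$ exceeds $1$ and the coefficient $-(s-1)$ is negative, so $-(s-1)\tr Y^{s/(s-1)}$ is concave; for $s<0$ the exponent $s/(s-1)$ lies in $(0,1)$ and the coefficient $-(s-1)$ is positive, so the same term is again concave; for $0<s<1$ the exponent $-s/(1-s)$ is negative and the coefficient $1-s$ is positive, so $(1-s)\tr Y^{-s/(1-s)}$ is convex. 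The linear piece $s\tr(XY)$ contributes to neither concavity nor convexity. For the $s>1$ and $s<0$ cases, where the $\sup$ is nominally taken over $Y\geq 0$, a one-line coercivity remark shows that $F_s\to-\infty$ as any eigenvalue of $Y$ tends to infinity, so the supremum is attained in the interior and equals the value at the critical point.

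The argument has no single hard step; the bookkeeping of the six sign/exponent combinations is the only real source of potential error. An alternative, more compact derivation that bypasses the convexity analysis is available: for $s>1$ the trace H\"older inequality $\tr(XY)\leq(\tr X^s)^{1/s}(\tr Y^{s/(s-1)})^{(s-1)/s}$ combined with scalar Young's inequality yields the extremal inequality together with its equality case $Y=X^{s-1}$ in one line, and the $s<0$ and $0<s<1$ cases admit analogous reverse-H\"older treatments. Since the applications in the remainder of this section are framed in Legendre-transform language, I would likely present the variational form above.
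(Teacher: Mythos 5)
Your argument is correct and is essentially the paper's own second proof: locate the critical point $Y_0=X^{s-1}$, check $F_s(Y_0)=\tr X^s$, and use global concavity/convexity of $F_s$ to conclude that the critical point is the extremum; your closing remark about H\"older plus Young is exactly the paper's first proof in the case $s>1$ (the other two regimes are where the paper needs the Lieb--Thirring inequality, which your convexity route avoids). One justification should be repaired, though: you attribute the convexity of $Y\mapsto\tr Y^r$ for $r\geq 1$ or $r\leq 0$ (and concavity for $0\leq r\leq 1$) to the operator convexity/concavity of $t\mapsto t^r$, but $t^r$ is operator convex only for $r\in[-1,0]\cup[1,2]$, whereas the exponents $s/(s-1)$ and $-s/(1-s)$ arising here range over all of $(1,\infty)$ and $(-\infty,0)$ as $s\to 1$. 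The fact you need is the weaker and fully general statement that $Y\mapsto\tr f(Y)$ is convex on Hermitian matrices for any convex scalar $f$ (Peierls' variational principle, or \cite[Theorem 2.10]{C}); with that citation in place the proof is complete.
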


\begin{proof}
We provide two different proofs.

\emph{First proof.} 
We first assume $s>1$. Then by H\"older's and Young's inequalities for any $Y\geq 0$,
\begin{align*}
\tr X Y \leq \left( \tr X^s \right)^{1/s} \left( \tr Y^{s/(s-1)} \right)^{(s-1)/s} \leq \frac 1s \tr X^s + \frac{s-1}{s} \tr Y^{s/(s-1)} \,.
\end{align*}
This proves $\geq$ in \eqref{eq:var1}, and for $\leq$ it suffices to choose $Y=X^s$.

Next, let $0<s<1$. Then by H\"older's and Young's inequalities for any $Y>0$,
\begin{align*}
\tr X^s & = \tr (Y^{s/2} X^s Y^{s/2}) Y^{-s} \leq \left( \tr (Y^{s/2} X^s Y^{s/2})^{1/s} \right)^s \left( \tr Y^{-s/(1-s)} \right)^{1-s} \\
& \leq s \tr (Y^{s/2} X^s Y^{s/2})^{1/s} + (1-s) \tr Y^{-s/(1-s)} \,.
\end{align*}
By the Lieb--Thirring inequality \cite{LT} (see also \cite[Theorem 7.4]{C}),
$$
\tr (Y^{s/2} X^s Y^{s/2})^{1/s} \leq \tr X Y \,,
$$
so we obtain $\geq$ in \eqref{eq:var2}, and for $\leq$ it suffices to choose $Y=X^s$.

Finally, for $s<0$ we apply \eqref{eq:var2} with $s$ replaced by $s/(s-1)\in(0,1)$ and with the roles of $X$ and $Y$ interchanged. We obtain
$$
\tr Y^{s/(s-1)} \leq \frac{s}{s-1} \tr XY + \frac{1}{1-s} \tr X^s \,.
$$
This proves $\geq$ in \eqref{eq:var1}, and for $\leq$ we choose again $Y=X^s$.

\emph{Second proof.}
We provide the details only in the case $0<s<1$ (since in that case before we had to use the Lieb--Thirring inequality). It is easy to see that the infimum on the right side of \eqref{eq:var2} is attained by some $Y_0$ and, differentiating with respect to $Y$, this minimizer satisfies
$$
s X - s Y_0^{-1/(1-s)} =0 \,,
$$
that is, $Y_0 = X^{-1+s}$. Inserting this we obtain
$$
\inf_{Y>0} \left( s \tr X Y + (1-s) \tr Y^{-s/(1-s)} \right)
= s \tr X Y_0 + (1-s) \tr Y_0^{-s/(1-s)} =
\tr X^s \,,
$$
as claimed.
\end{proof}

The second ingredient in this section is a result about suprema and infima of convex functions.

\begin{lemma}\label{convex}
Let $X$ be a convex subset of a vector space, $Y$ a set and $f:X\times Y\to\R$ a function such that $f(\cdot,y)$ is convex for any $y\in Y$.
\begin{enumerate}
\item Then $x\mapsto \sup_{y\in Y} f(x,y)$ is convex.
\item If $Y$ is a convex subset of a vector space and $f$ is jointly convex on $X\times Y$, then $x\mapsto \inf_{y\in Y} f(x,y)$ is convex.
\end{enumerate}
\end{lemma}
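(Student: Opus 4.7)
The plan is to prove both parts directly from the definitions, since this is a classical fact about convex functions and no deeper machinery is required. Set $g(x) := \sup_{y\in Y} f(x,y)$ for part (1) and $h(x) := \inf_{y\in Y} f(x,y)$ for part (2). In each case we fix $x_0,x_1\in X$ and $\theta\in[0,1]$ and establish the convexity inequality $g((1-\theta)x_0+\theta x_1)\leq(1-\theta)g(x_0)+\theta g(x_1)$, respectively for $h$.

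For part (1), the idea is that a pointwise supremum of convex functions is convex. I would fix any $y\in Y$ and use the convexity of $f(\cdot,y)$ to estimate
\begin{equation*}
f((1-\theta)x_0+\theta x_1,y) \leq (1-\theta) f(x_0,y) + \theta f(x_1,y) \leq (1-\theta) g(x_0) + \theta g(x_1) \,,
\end{equation*}
where the second inequality is just the definition of $g$. Taking the supremum over $y\in Y$ on the left-hand side then yields the desired bound for $g$.

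For part (2), the strategy is different: we exploit joint convexity by varying two arguments simultaneously. Given any $y_0,y_1\in Y$, joint convexity of $f$ on $X\times Y$ yields
\begin{equation*}
h((1-\theta)x_0+\theta x_1) \leq f((1-\theta)x_0+\theta x_1,(1-\theta)y_0+\theta y_1) \leq (1-\theta) f(x_0,y_0) + \theta f(x_1,y_1) \,,
\end{equation*}
the first inequality being simply the definition of $h$ at $(1-\theta)x_0+\theta x_1$ applied to the particular point $(1-\theta)y_0+\theta y_1\in Y$. Now I would take the infimum over $y_0\in Y$ and then over $y_1\in Y$ on the right-hand side to obtain $(1-\theta)h(x_0)+\theta h(x_1)$, which gives the convexity of $h$.

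There is no real obstacle here; the only subtle point is seeing that joint convexity (rather than separate convexity in each variable) is genuinely needed in part (2), since otherwise one cannot move both $x$ and $y$ along the interpolation line simultaneously inside a single application of the convexity inequality. In part (1), by contrast, no convexity in $y$ is needed, and $Y$ is not required to carry any linear structure.
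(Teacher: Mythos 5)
Your proposal is correct and follows essentially the same route as the paper: part (1) is the standard pointwise-supremum argument, and part (2) uses joint convexity along the diagonal segment $((1-\theta)x_0+\theta x_1,(1-\theta)y_0+\theta y_1)$. The only cosmetic difference is that the paper handles the infimum in part (2) via $\epsilon$-approximate minimizers $y_0,y_1$, whereas you take iterated infima over $y_0$ and $y_1$ on the right-hand side; the two are interchangeable.
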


\begin{proof}
Let $x_0,x_1\in X$ and $0<\theta<1$. 

(1) Abbreviating $g(x):= \sup_{y\in Y} f(x,y)$, we have for any $y\in Y$,
$$
f((1-\theta)x_0+\theta x_1,y) \leq (1-\theta) f(x_0,y) + \theta f(x_1,y) \leq (1-\theta) g(x_0) + \theta g(x_1) \,.
$$
Taking the supremum over $y$ we obtain $g((1-\theta)x_0+\theta x_1) \leq (1-\theta) g(x_0) + \theta g(x_1)$.

(2) Let $h(x):=\inf_{y\in Y} f(x,y)$. Let $\epsilon>0$ and choose $y_0,y_1\in Y$ such that
$$
f(x_0,y_0)\leq h(x_0) + \epsilon
\qquad\text{and}\qquad
f(x_1,y_1) \leq h(x_1) + \epsilon \,.
$$
Then, by joint convexity,
\begin{align*}
h((1-\theta)x_0+\theta x_1) &
\leq f((1-\theta)x_0+\theta x_1,(1-\theta)y_0+\theta y_1) \\
& \leq (1-\theta) f(x_0,y_0) + \theta f(x_1,y_1) \\
& \leq (1-\theta) h(x_0) + \theta h(x_1) + \epsilon \,.
\end{align*}
Since $\epsilon>0$ is arbitrary, we obtain $h((1-\theta)x_0+\theta x_1) \leq (1-\theta) h(x_0) + \theta h(x_1)$.
\end{proof}

We now use these tools to deduce the convexity assertions in Proposition \ref{ups}.

\begin{proof}[Proof of Proposition \ref{ups}. (2)]
Let $-1\leq p< 0$. We begin with the more difficult case $s< 1$ (which is the only case that will be used in the proof of Theorem \ref{main}). Then, by \eqref{eq:var2},
\begin{align*}
\Upsilon_{p,s}(A)
& = \inf_{Y>0} \left( s \tr K^* A^p K Y + \left(1- s \right) \tr Y^{-s/(1-s)} \right) \\
& = \inf_{C>0} \left( s \tr K^* A^p K C^{1-p} + \left(1- s\right) \tr C^{-s(1-p)/(1-s)} \right) .
\end{align*}
By Ando's convexity theorem \cite{A} (see also \cite[Theorem 6.2]{C}), $(A,C) \mapsto \tr K^* A^p K C^{1-p}$ is jointly convex. Moreover, since $-s(1-p)/(1-s)<0$, $C \mapsto \tr C^{-\frac {s(1-p)}{1-s}}$ is convex. Thus, by part (2) of Lemma~\ref{convex}, $\Upsilon_{p,s}$ is convex.

We now consider the case $s>1$ and therefore
\begin{align*}
\Upsilon_{p,s}(A)
& = \sup_{Y\geq 0} \left( s \tr K^* A^p K Y - \left( s - 1\right) \tr Y^{s/(s-1)} \right).
\end{align*}
Since $A\mapsto A^p$ is operator convex, $A \mapsto \tr K^* A^p K Y$ is convex. Thus, by part (1) of Lemma \ref{convex}, $\Upsilon_{p,s}$ is convex.

The case $s=1$ is even simpler and follows directly from the operator convexity of $A\mapsto A^p$.
\end{proof}

\begin{proof}[Proof of Proposition \ref{ups}. (3)]
The argument is similar to that in part (1) and we refer to \cite{CL2} for details . This result will not be needed in the proof of Theorem~\ref{main}.
\end{proof}

Before we turn to the proof of Theorem \ref{main} we recall a joint operator convexity statement from \cite{CFL}.

\begin{lemma}\label{opconv}
For any $-1\leq q\leq 0$ and any $K\in\mathcal M_N$ the map
$$
\mathcal P_N\times\mathcal P_N \ni (A,B)\mapsto AKB^qK^* A
$$
is jointly convex.
\end{lemma}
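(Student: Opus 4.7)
My plan is to reduce the whole range $-1\leq q\leq 0$ to the anchor case $q=-1$ via an integral representation, with the $q=-1$ case itself handled by a Schur complement (block matrix) argument. For the case $q=-1$, the key input is the standard joint operator convexity of $(X,B)\mapsto X^*B^{-1}X$ on $\mathcal M_N\times\mathcal P_N$: for any such pair, the factorization
\[
\begin{pmatrix} X^*B^{-1}X & X^* \\ X & B \end{pmatrix} = \begin{pmatrix} X^* \\ B \end{pmatrix} B^{-1} \begin{pmatrix} X & B \end{pmatrix} \geq 0
\]
shows the block matrix is positive semidefinite, so any convex combination of two such block matrices is positive semidefinite, and the Schur complement criterion applied in reverse identifies the $(1,1)$-block of the combination as an upper bound for $X_\theta^*B_\theta^{-1}X_\theta$. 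Substituting $X=K^*A$, which is linear in $A$, gives joint operator convexity of $(A,B)\mapsto AKB^{-1}K^*A$.

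For $-1<q<0$ I would use the integral representation
\[
B^q = \frac{\sin(-\pi q)}{\pi} \int_0^\infty t^q (tI+B)^{-1}\,dt,
\]
valid for $q\in(-1,0)$ (the integrand is integrable like $t^q$ near $0$ and like $t^{q-1}$ near $\infty$). This yields
\[
AKB^qK^*A = \frac{\sin(-\pi q)}{\pi} \int_0^\infty t^q\, AK(tI+B)^{-1}K^*A\,dt.
\]
Applying the previous paragraph with $B$ replaced by $tI+B\in\mathcal P_N$ shows that each integrand is jointly operator convex in $(A,B)$; since $t^q\geq 0$, the integral is a positive superposition of jointly operator convex maps, hence itself jointly operator convex. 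The boundary case $q=0$ reduces to showing that $A\mapsto A(KK^*)A$ is operator convex in $A$ alone, which follows from the elementary identity
\[
(1-\theta)A_0MA_0 + \theta A_1 M A_1 - A_\theta M A_\theta = \theta(1-\theta)(A_0-A_1)M(A_0-A_1) \geq 0
\]
with $M:=KK^*\geq 0$ and $A_\theta := (1-\theta)A_0+\theta A_1$.

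The substantive content is concentrated in the Schur complement identity at $q=-1$; once that is in hand, the passage to the open interval via the Stieltjes-type integral and the treatment of $q=0$ are both routine. I do not anticipate any serious obstacle beyond the mild bookkeeping of verifying convergence and positivity of the integral representation, both of which are standard.
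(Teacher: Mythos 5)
Your proof is correct and is essentially the ``simple alternative proof'' that the paper invokes (via \cite[Remark 3.5]{CFL}): the reduction of $-1<q<0$ to the classical $q=-1$ case of Kiefer and Lieb--Ruskai through the integral representation $B^q=\frac{\sin(-\pi q)}{\pi}\int_0^\infty t^q(t+B)^{-1}\,dt$, with the $q=-1$ case handled by the standard Schur-complement factorization. The paper's primary route instead derives the lemma from \cite[Corollary 2.1]{L} together with a dimension-doubling trick to absorb $K$; your version has the advantage of being self-contained and elementary, and all the steps (positivity of the block matrix, convergence of the Stieltjes integral, the $q=0$ identity) check out.
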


In fact, in \cite[Theorem 3.2]{CFL} we also proved that the restriction $-1\leq q\leq 0$ is necessary and that convexity does not hold if $A$ is raised to a non-zero power.

In \cite{CFL} we observed that this lemma follows from \cite[Corollary 2.1]{L}. (The presence of the operators $K$ and $K^*$ does not present a problem. They can be dealt with by doubling of dimension as, for instance, in \cite[Lemma 1.1]{CFL}. In this way they can be made into unitary operators and then they can be absorbed into $B$.) Also, in \cite[Remark 3.5]{CFL} we explained a simple alternative proof which reduces the case $-1<q<0$ to the well-known case $q=-1$ \cite{K,LR}.

We now prove the convexity assertions in Theorem \ref{main}.

\begin{proof}[Proof of Theorem \ref{main}. (2)]
We assume that $-1\leq q\leq p \leq 0$ and that $s>0$. By an approximation argument we may assume that $K$ is invertible and we denote $L := K^{-*}$. Then, by \eqref{eq:var1},
\begin{align*}
\Psi_{p,q,s}(A,B)
& = \tr \left( B^{-q/2} L^* A^{-p} L B^{-q/2} \right)^{-s} \\
& = \sup_{Y\geq 0} \left( -s \tr B^{-q/2} L^* A^{-p} L B^{-q/2} Y + (s+1) \tr Y^\frac{s}{s+1} \right) \\
& = \sup_{Z \geq 0} \left( -s \tr L^* A^{-p} L Z + (s+1) \tr (B^{q/2} Z B^{q/2})^\frac{s}{s+1} \right).
\end{align*}
Since $A\mapsto A^{-p}$ is operator concave, $A\mapsto -s \tr L^* A^{-p} L Z$ is convex for any $Z$. Moreover, by part (1) of Proposition \ref{ups}, $B\mapsto \tr (Z^{1/2} B^{q} Z^{1/2})^\frac{s}{s+1}=\tr(B^{q/2}ZB^{q/2})^\frac{s}{s+1}$ is convex for any $Z$. (We apply the lemma with $A$ replaced by $B$, $K$ by $Z^{1/2}$, $p$ by $q$ and $s$ by $s/(s+1)$.) Thus, by part (1) of Lemma \ref{convex}, $\Psi_{p,q,s}$ is convex.
\end{proof}

\begin{proof}[Proof of Theorem \ref{main}. (3)]
We break the proof into three steps.

\emph{Step 1.} We assume $1\leq p\leq 2$, $-1\leq q\leq 0$ and $s\geq 1/(1+q)$. Then, by \eqref{eq:var1},
\begin{align*}
\Psi_{p,q,s}(A,B) & = \sup_{Y\geq 0} \left( s \tr B^{q/2} K^* A^p K B^{q/2} Y - (s-1) \tr Y^{s/(s-1)} \right) \\
& = \sup_{Z\geq 0} \left( s \tr K^* A^p K Z - (s-1) \tr \left( B^{-q/2} Z B^{-q/2} \right)^{s/(s-1)} \right).
\end{align*}
Since $A\mapsto A^p$ is operator convex, $A\mapsto s \tr K^* A^p K Z$ is convex for any $Z\geq 0$. Moreover, by part (1) of Proposition \ref{ups} and since $\frac{s}{s-1}\leq -\frac1q$, $B\mapsto \tr \left( Z^{1/2} B^{-q} Z^{1/2} \right)^{\frac s{s-1}} = \tr\left( B^{-q/2} Z B^{-q/2} \right)^{\frac s{s-1}}$ is concave. Thus, by part (1) of Lemma \ref{convex}, $\Psi_{p,q,s}$ is convex.

\emph{Step 2.} We assume $1< p\leq 2$, $-1\leq q\leq 0$ and $s\geq 1/(p-1)$. First, assume that $s=1$, so that necessarily $p=2$. The joint convexity follows from \eqref{eq:commute} and the fact that according to Lemma \ref{opconv} $(A,B)\mapsto AK B^q K^* A$ is operator convex.

Now let $s>1$ (and still $s\geq 1/(p-1)$). Then, by \eqref{eq:commute} and \eqref{eq:var1},
\begin{align*}
\Psi_{p,q,s}(A,B) & = \sup_{Y> 0} \left( s \tr A^{p/2} K B^q K^* A^{p/2} Y - (s-1) \tr Y^{s/(s-1)} \right) \\
& = \sup_{Z> 0} \left( s \tr A K B^q K^* A Z - (s-1) \tr \left( A^{1-p/2} Z A^{1-p/2} \right)^{s/(s-1)} \right).
\end{align*}
Again by Lemma \ref{opconv}, $(A,B)\mapsto \tr A K B^q K^* A Z$ is jointly convex. Moreover, by part (1) of Proposition \ref{ups} and since $s/(s-1)\leq 1/(2-p)$, $A\mapsto \tr \left( Z^{1/2} A^{2-p} Z^{1/2} \right)^{s/(s-1)} = \tr \left( A^{1-p/2} Z A^{1-p/2} \right)^{s/(s-1)}$ is concave. Thus, by part (1) of Lemma \ref{convex}, $\Psi_{p,q,s}$ is convex.

\emph{Step 3.} We assume that $p=2$ and $1/(2+q)\leq s<1$. Then, by \eqref{eq:commute} and \eqref{eq:var2},
\begin{align*}
\Psi_{2,q,s}(A,B) & = \inf_{Y> 0} \left( s \tr A K B^q K^* A Y + (1-s) \tr Y^{-s/(1-s)} \right) \\
& = \inf_{Z> 0} \left( s \tr A K B^q K^* A Z^{1-1/s} + (1-s) \tr Z \right).
\end{align*}
By \cite[Corollary 2.1]{L} and since $q+1-1/s\geq -1$, $(A,B,Z)\mapsto \tr A K B^q K^* A Z^{1-1/s}$ is jointly convex. (Note that the operator $K$, which is not present in \cite[Corollary 2.1]{L}, can be dealt with as explained after Lemma \ref{opconv}.) Thus, by part (2) of Lemma \ref{convex}, $\Psi_{2,q,s}$ is convex.
\end{proof}

\begin{remark}
Both in Steps 1 and 2 of the previous proof we applied part (1) of Proposition \ref{ups}. We have proved the latter, as a special case of part (1) of Theorem \ref{main}, using the complex analysis method. It is interesting to note, however, that in Steps 1 and 2 of the previous proof we applied part (1) of Proposition \ref{ups} only with a power $s\geq 1$. For such powers part (1) of Proposition \ref{ups} can be proved also using the variational method; see \cite[Theorem 1.1]{CL2}.
\end{remark}

\begin{remark}
In the special case $p=2$, $-1\leq q\leq 0$, $s\geq 1$ there is a proof which is only based on Lemma \ref{opconv} and which does not use Proposition \ref{ups}; see \cite[Remark 4.3]{CFL}.
\end{remark}

\begin{remark}
Let us give an alternative proof for $p=2$ and $1/(2+q)\leq s<1$, which does not use the deep \cite[Corollary 2.1]{L}, but only the special case $q=-1$ of Lemma \ref{opconv}, plus part (3) of Theorem \ref{main} for $p=1$. (Recall that Lemma \ref{opconv} can be deduced from \cite[Corollary 2.1]{L}, but that its special case $q=-1$ is rather simple and well-known.) We do \emph{not} use \eqref{eq:commute}, but use directly \eqref{eq:var2} to write
\begin{align*}
\Psi_{2,q,s}(A,B) & = \inf_{Y> 0} \left( s \tr B^{q/2} K^* A^2 K B^{q/2} Y + (1-s) \tr Y^{-s/(1-s)} \right) \\
& = \inf_{Z> 0} \left( s \tr K^* A^2 K Z^{-1} + (1-s) \tr (B^{-q/2} Z^{-1} B^{-q/2})^{-s/(1-s)} \right).
\end{align*}
By Lemma \ref{opconv}, $(A,Z)\mapsto \tr K^* A^2 K Z^{-1}$ is jointly convex. Moreover, by Step 1 in the proof of part (3) of Theorem \ref{main}, $(B,Z)\mapsto \tr(Z^{1/2} B^q Z^{1/2})^{\frac s{1-s}} = \tr(B^{-q/2} Z^{-1} B^{-q/2})^{-\frac s{1-s}}$ is jointly convex. Thus, by part (2) of Lemma \ref{convex}, $\Psi_{2,q,s}$ is convex.
\end{remark}


\section{Necessary conditions}

In this section we reproduce the arguments from \cite{B,CL2,Hi2} to prove Propositions \ref{upsnec} and \ref{mainnec} containing necessary conditions for concavity and convexity.

\begin{proof}[Proof of Proposition \ref{upsnec}]
(1) Clearly, concavity of $\Upsilon_{p,s}$ on $\mathcal P_1$ implies that $(0,\infty)\ni a\mapsto a^{ps}$ is concave and therefore $0<ps\leq 1$. Moreover, taking
$$
A = \begin{pmatrix}
a & 0 \\ 0 & b
\end{pmatrix}
\qquad\text{and}\qquad
K = \begin{pmatrix} 1 & 0 \\ 1 & \epsilon \end{pmatrix}
$$
with numbers $a,b$ and $\epsilon$ and letting $\epsilon\to 0$, we deduce from concavity on $\mathcal P_2$ that $(0,\infty)\times(0,\infty)\ni (a,b)\mapsto (a^p+b^p)^s$ is jointly concave. By differentiating twice with respect to $a$ and evaluating at $a\ll b$ we find that $p\leq 1$.

(2) Clearly, convexity of $\Upsilon_{p,s}$ on $\mathcal P_1$ implies that $(0,\infty)\ni a\mapsto a^{ps}$ is convex and therefore $0<ps\leq 1$ or $ps<0$. If $s=1$, then the convexity of $\Upsilon_{p,s}$ on $\mathcal P_2$ for any (invertible) $K$ implies that $\mathcal P_2\ni A\mapsto A^p$ is operator convex. As is well known, this implies that $-1\leq p<0$ or $1\leq p\leq 2$. Now let $s\neq 1$. Taking
$$
A = \begin{pmatrix}
a & 0 \\ 0 & b
\end{pmatrix}
\qquad\text{and}\qquad
K = \begin{pmatrix} 1 & 0 \\ 1 & \epsilon \end{pmatrix}
$$
where now $a,b$ and $1$ are $2\times 2$ matrices and letting $\epsilon\to 0$, we deduce from convexity on $\mathcal P_4$ that $\mathcal P_2 \times\mathcal P_2 \ni (A,B)\mapsto \tr (A^p+B^p)^s$ is jointly convex. In particular, for any $t>0$, $\mathcal P_2\ni A\mapsto \tr (t A^p + B^p)^s$ is convex. Using
$$
\tr (t A^p + B^p)^s = \tr B^{ps} + st \tr B^{p(s-1)} A^p + o(t)
\qquad\text{as}\ t\to 0
$$
we deduce that $\mathcal P_2\ni A\mapsto\tr B^{p(s-1)} A^p$ is convex, which (since $s\neq 1$) means that $\mathcal P_2\ni A\mapsto A^p$ is operator convex. As before, this implies $-1\leq p<0$ or $1\leq p\leq 2$.
\end{proof}

\begin{proof}[Proof of Proposition \ref{mainnec}]
(1) Clearly, concavity of $\Psi_{p,q,s}$ on $\mathcal P_1\times\mathcal P_1$ implies that $(0,\infty)\ni a\mapsto a^{(p+q)s}$ is concave and therefore $0\le (p+q)s\leq 1$. Writing an invertible $K$ as $K=U|K|$ with $U$ unitary, chosing $B^{q/2} = |K|$ and absorbing $U$ into $A$ we deduce from the concavity of $\mathcal P_2\ni A\mapsto\Psi_{p,q,s}(A,B)$ that $\mathcal P_2 \ni A \mapsto\Upsilon_{p,s}(A)$ is concave. According to part (1) of Proposition \ref{upsnec} this implies $0\leq p\leq 1$. Exchanging the roles of $p$ and $q$ we find that $0\leq q\leq 1$.

(2) Clearly, convexity of $\Psi_{p,q,s}$ on $\mathcal P_1\times\mathcal P_1$ implies that $(0,\infty)\times(0,\infty)\ni (a,b)\mapsto a^{ps}b^{qs}$ is jointly concave. By an elementary analysis of the Hessian we conclude that, if $p\geq 0$, then $q\leq 0$ and $(p+q)s\geq 1$. Similarly as in the first part of the proof, part (2) of Proposition \ref{upsnec} implies either $-1\leq p\leq 0$ or $1\leq p\leq 2$ and, exchanging the roles of $p$ and $q$, either $-1\leq q\leq 0$ or $1\leq q\leq 2$. This corresponds to four disjoint squares in the $(p,q)$ plane. The square $-1\leq p\leq 0$, $1\leq q\leq 2$ is excluded by our assumption $p\geq q$ and the square $1\leq p,q\leq 2$ is excluded by the above elementary analysis. This concludes the proof.
\end{proof}


\bibliographystyle{amsalpha}

\end{document}